\documentclass[reqno,11pt]{amsproc}
\usepackage{amssymb}
\usepackage{amsmath}
\usepackage{amsthm}
\usepackage{amsfonts}
\usepackage{microtype}
\usepackage{xcolor}
\usepackage[left=2.5cm,right=2.5cm]{geometry}
\usepackage{mathrsfs}
\usepackage{mathtools}
\usepackage{thmtools}
\usepackage{cancel}

\usepackage[backend=biber,style=ieee,sorting=none,backref=true,citestyle=numeric-comp,giveninits=false]{biblatex}
\usepackage{tikz,tikz-cd}
\usetikzlibrary{arrows,cd,decorations.markings,arrows.meta}

\definecolor{myurlcolor}{rgb}{0,0,0.4}
\definecolor{mycitecolor}{rgb}{0,0.5,0}
\definecolor{myrefcolor}{rgb}{0.5,0,0}
\usepackage[breaklinks=true]{hyperref}
\hypersetup{colorlinks,
linkcolor=myrefcolor,
citecolor=mycitecolor,
urlcolor=myurlcolor,
hypertexnames=false}
\usepackage[capitalize]{cleveref}

\theoremstyle{plain}
\declaretheorem[name=Dummy,numberwithin=section]{dummy}
\declaretheorem[name=Theorem,sibling=dummy]{thm}
\declaretheorem[name=Lemma,Refname={Lemma,Lemmas},sibling=dummy]{lem}
\declaretheorem[name=Proposition,sibling=dummy]{prop}
\declaretheorem[name=Corollary,sibling=dummy]{cor}
\declaretheorem[name=Definition,sibling=dummy]{defn}
\declaretheorem[name=Question,sibling=dummy]{qstn}

\theoremstyle{remark}
\declaretheorem[name=Example,sibling=dummy]{ex}
\declaretheorem[name=Remark,sibling=dummy]{rem}

\numberwithin{equation}{section}
\Crefname{equation}{}{}		

\allowdisplaybreaks

\let\originalleft\left
\let\originalright\right
\renewcommand{\left}{\mathopen{}\mathclose\bgroup\originalleft}
\renewcommand{\right}{\aftergroup\egroup\originalright}

\usepackage{enumitem}
\setlist[enumerate]{label=(\roman*),itemsep=5pt,topsep=8pt}
\setlist[itemize]{label=$\triangleright$,itemsep=5pt,topsep=6pt}
\Crefformat{enumi}{#2#1#3}

\newcommand{\newterm}[1]{\textbf{#1}}

\newcommand{\beq}{\begin{equation}}
\newcommand{\eeq}{\end{equation}}

\newcommand{\R}{\mathbb{R}}
\newcommand{\C}{\mathbb{C}}

\newcommand{\id}{\mathsf{id}}		

\newcommand{\FinSet}{\mathsf{FinSet}}
\newcommand{\FinStoch}{\mathsf{FinStoch}}
\newcommand{\Set}{\mathsf{Set}}
\newcommand{\Conv}{\mathsf{Conv}}

\newcommand{\BoolAlg}{\mathsf{BoolAlg}}
\newcommand{\LT}{\mathsf{LT}}
\newcommand{\measfun}{\mathcal{M}}	
\newcommand{\measfunb}{\mathcal{N}}	

\newcommand{\tr}{\mathrm{tr}}

\usepackage{hyphenat}
\usepackage{todonotes}

\begin{document}



\title{Why measurements are made of effects}

\author{Tobias Fritz}

\address{Department of Mathematics, University of Innsbruck, Austria}
\email{tobias.fritz@uibk.ac.at}

\keywords{}

\thanks{\textit{Acknowledgements.} We thank Tom\'a\v{s} Gonda, Ryshard-Pavel Kostecki and Sam Staton for discussion and a number of interesting suggestions.
	The author would like to thank the Isaac Newton Institute for Mathematical Sciences
for the support and hospitality during the programme \emph{Causal inference: From theory to
practice and back again} when work on this paper was undertaken. This work was
supported by: EPSRC Grant Number EP/Z000580/1.}

\begin{abstract}
	Both in quantum theory and in general probabilistic theories, measurements with $n$ outcomes are modelled as $n$-tuples of \emph{effects} summing up to the unit effect. Why is this the case, and can this assumption be meaningfully relaxed? Here we develop \emph{generalized measurement theories (GMTs)} as a mathematical framework for physical theories that is complementary to general probabilistic theories, and where this kind of question can be made precise and answered. We then give a definition of \emph{probabilistic state} on a GMT, prove that measurements are made of effects in every GMT in which the probabilistic states separate the measurements, and also argue that this separation condition is physically well-motivated. Finally, we also discuss when a GMT should be considered classical and characterize GMTs corresponding to Boolean algebras as those that are strongly classical and projective.
\end{abstract}

\maketitle

\tableofcontents

\section{Introduction}

Every physical theory has a notion of measurement.
Such a notion describes all the theoretically possible ways of gathering information about the physical system under consideration.
In quantum theory, there are \emph{two} main notions of measurement: PVMs (projective measurements) and POVMs (positive operator-valued measures).
Then a POVM with outcomes in a finite set $X$ is a family of positive operators
\begin{equation}
	\label{povm}
	(P_x)_{x \in X}
\end{equation}
summing up to the identity operator,
\begin{equation}
	\label{povm_normalized}
	\sum_{x \in X} P_x = 1.
\end{equation}
Similar formalizations of measurements are commonly used in the more permissive framework of \newterm{general probabilistic theories (GPTs)}, also known as \emph{convex operational theories}~\cite{Ludwig_axiomatic,Barrett,BW,Plavala}.
To spell this out, recall that a GPT can be defined as a triple $(V,\mathcal{S},\tr)$ consisting of a finite-dimensional\footnote{The extension to the infinite-dimensional case is feasible using standard structures from functional analysis, as already considered in 1970 by Davies and Lewis~\cite{DL} and independently by Ludwig~\cite{Ludwig_deutung,Ludwig_axiomatic}, with closely related structures considered even earlier by Gunson~\cite{Gunson}. We will consider a variant of Ludwig's approach in \cref{gpt}.} real vector space $V$, a compact convex set of states $\mathcal{S} \subseteq V$ and a normalization map
\[
	\tr \: \colon \: V \longrightarrow \R
\]
which is linear and satisfies $\tr(s) = 1$ for all $s \in \mathcal{S}$.
This definition makes clear that the states are the primary structure of a GPT and measurements are a derived concept.
Under the \newterm{no-restriction hypothesis}~\cite{CDP}, one defines an \newterm{effect} to be a linear functional $e : V \to \R$ which is nonnegative on $\mathcal{S}$ and bounded above by $\tr$, meaning that
\[
	0 \leq e(s) \leq 1 \qquad \forall s \in \mathcal{S}.
\]
A measurement in the GPT with outcomes in a finite set $X$ is then a family of effects
\begin{equation}
	\label{effect_family}
	(e_x)_{x \in X}
\end{equation}
such that
\begin{equation}
	\label{effect_normalization}
	\sum_{x \in X} e_x = \tr,
\end{equation}
where the sum is in the dual space $V^*$.
By imposing various kinds of additional conditions on the effects or on how measurements can be combined to effects, one can alternative notions of measurement which violate the no-restriction hypothesis while still being a priori reasonable on physical grounds~\cite{FGHL}.

\begin{ex}
	In quantum theory with $n$-dimensional Hilbert space, $C$ is the cone of positive semidefinite matrices in the matrix algebra $M_n(\C)$, the normalization functional $\tr$ is the trace, $\mathcal{S}$ corresponds to the set of density matrices, and the no-restriction hypothesis implies that the measurements are precisely the POVMs~\eqref{povm}.

	On the other hand, introductory quantum mechanics typically only introduces \emph{projective} measurements.
	It seems reasonable to consider these as forming a separate system of measurements which corresponds to a different version of quantum theory.
	Now the no-restriction hypothesis is violated.
	As long as one considers single-system quantum theory only---e.g.~because one models a closed universe in its entirety\footnote{We leave aside the question whether quantum theory can model systems which contain every observer. A reader concerned about this can instead consider a quantum system with more states than are in its environment, so that not every POVM can be dilated to a projective measurement.}---then the choice between POVMs and projective measurements leads to distinct quantum theories that should arguably be carefully distinguished, since they are physically and empirically inequivalent.\footnote{More accurately, they are empirically \emph{conditionally} inequivalent: with POVMs, one \emph{has the freedom} to model observations in a more general way than with projective measurements. The inequivalence appears as soon as one actually has a concrete model of a physical system with at least one observation modelled non-projectively.}
\end{ex}

In the quantum foundations literature, GPTs are frequently considered to be the most general class of conceivable physical theories.
Much research has been dedicated to understanding the scope of this generality, motivating the framework from physical principles, and to finding additional principles which characterize quantum theory with POVMs as a GPT.
But as far as we know, the following question not only remains unanswered, but has apparently not even appeared as a question in print before:

\begin{qstn}
	\label{main_qstn}
	Why are measurements made of effects?
	That is, why should a measurement in a physical theory be given by a tuple of effects as in~\eqref{effect_family} satisfying~\eqref{effect_normalization}?
	Are there meaningful (toy) theories where this is not the case?
\end{qstn}

Our goal in this paper is twofold.
First, we develop a simple mathematical framework in which one can make sense of physical theories where measurements are not necessarily given by tuples of effects.
These \newterm{generalized measurement theories} can be thought of as generalizations of GPTs, with the significant difference that measurements are the primary structure while the concept of state is derived.
Within this framework, \cref{main_qstn} can be investigated formally.
Second, we answer \cref{main_qstn} by showing that in every generalized measurement theory where the measurements are separated by probabilistic states, the measurements are given by normalized tuples of effects (\cref{main_thm}).
As an aside, we also consider the question of what it should mean for a generalized measurement theory to be \emph{classical}, and we provide a characterization theorem for those generalized measurement theories which correspond to tuples of effects with values in a Boolean algebra (\cref{thm_classical}).

\begin{rem}
	\label{scope}
	\begin{enumerate}
		\item As may already have become clear, what we mean by ``measurement'' for the purposes of this paper does \emph{not} involve any specification of what happens to the measured system after the measurement has been performed.
			However, this is of course a very important aspect of measurements in physical theories.
			We leave the inclusion of this aspect in the framework developed here for future work.\footnote{As far as the basic mathematical setting is concerned, the main difference is that one has to work with monads rather than $\Set$-valued functors.}
		\item Throughout the paper, we restrict to the case of measurements with finitely many outcomes, since this is technically simpler but still seems to display all the relevant conceptual structure.
			We do not expect the generalization to infinitely many outcomes (including continuous outcomes) to be meaningfully different, and an analogous rigorous treatment of this case would be possible by working with measurable spaces instead of finite sets.
	\end{enumerate}
\end{rem}

\section{Generalized measurement theories}

So how could we possibly entertain the idea of measurements that are not made of effects?
It may not be obvious how one can even talk about this in mathematical terms.
We solve this problem through abstraction: instead of saying what an individual measurement in a given physical theory is, we instead axiomatize the structure that one has on the collection of \emph{all} measurements on a physical system.
Following e.g.~\cite{FGHL}, one can work with at least two kinds of algebraic structure on the set of measurements:
\begin{itemize}
	\item \emph{Post-processing} by mapping the outcomes to new outcomes, possibly identifying some of them in the process, which is also known as \newterm{coarse-graining}.
		For example, given an $X$-valued measurement $(e_x)_{x \in X}$ in a GPT and any function $f : X \to Y$,\footnote{We do not consider non-deterministic post-processing here, since all non-deterministic post-processings can be obtained by combining deterministic post-processing with mixing (in the sense of the second item).}
		we obtain a $Y$-valued measurement $f_* e$ whose components are
		\begin{equation}
			\label{post_processing}
			(f_* e)_{y \in Y} \coloneqq \sum_{x \in f^{-1}(y)} e_x.
		\end{equation}
	\item \emph{Mixing} any collection of measurements with values in any set $X$, which equips the set of measurements with a convex structure in the form of a barycentric algebra~\cite[Section~12.7]{Schechter}.
		In operational terms, one performs a mixed measurement by randomly selecting a measurement according to some probability distribution on the set of measurements and then performing the selected one.
\end{itemize}

One could now take these two kinds of algebraic structure---with suitable compatibility conditions imposed in the form of equational laws---as the definition of a generalized measurement theory.
However, we have found the structure of post-processing alone to be remarkably powerful, while the structure of mixing does not seem to add much beyond that.
We therefore limit ourselves to post-processing as the only algebraic structure on the collection of measurements.
Axiomatizing this algebraic structure is then easily done using the language of categories and functors.

\begin{defn}
	\label{gmt}
	A \newterm{generalized measurement theory (GMT)} is a functor
	\begin{equation}
		\label{gmt_functor}
		\measfun \: \colon \: \FinSet \to \Set
	\end{equation}
	with $\measfun(1) \cong 1$, and which we think of as assigning:
	\begin{itemize}
		\item to every finite set $X$ the set $\measfun(X)$ of all measurements with outcomes in $X$ on the physical system under consideration,
		\item to every function $f : X \to Y$ the post-processing map $\measfun(f) : \measfun(X) \to \measfun(Y)$.
	\end{itemize}
\end{defn}

\begin{rem}
	\label{gmt_remark}
	The definition deserves some explanation.
	\begin{enumerate}
		\item The symbol ``$1$'' denotes any fixed singleton set.\footnote{This is a customary notation in category theory.}
			Thus the condition $\measfun(1) \cong 1$ states exactly that $\measfun(1)$ is a singleton set again,\footnote{A functor with this property is also called \emph{affine}~\cite{Jacobs_semantics}, although there also is an unrelated notion of ``affine functor'' in algebraic geometry~\cite{NSS}.}
			or equivalently that there must be exactly one measurement with only one outcome.
			As we will see below, this innocuous condition has important consequences for the structure of GMTs.
		\item For a measurement $\alpha \in \measfun(X)$ and a function $f : X \to Y$, we also use the shorthand notation
			\[
				f_* \alpha \coloneqq \measfun(f)(\alpha)
			\]
			for the post-processed measurement.
			Then the requirement that $\measfun$ is a functor means that for all additional $g : Y \to Z$, we must have
			\begin{equation}
				\label{functoriality}
				g_*(f_* \alpha) = (g \circ f)_* \alpha,
			\end{equation}
			and moreover the more trivial requirement that $\id_{X,*} \alpha = \alpha$ for the identity map $\id_X : X \to X$.
			These are the two equational laws that post-processing must satisfy.
		\item\label{topos}
			An idea related to \cref{gmt} is the \emph{spectral presheaf} of topos quantum theory~\cite{Doering,Flori}.
			In our terminology, the spectral presheaf assigns to a quantum system all the projective measurements on that system and keeps track of how those measurements coarse-grain to each other.
			There are important technical differences which lead to the spectral presheaf being defined on a different base category, which depends on the system under consideration.
			Our simple \cref{gmt} avoids this.
		\item We could also allow \emph{random} post-processing by replacing $\FinSet$ with $\FinStoch$, the category of finite sets and stochastic matrices.
			Likewise, we could allow \emph{mixing} by replacing $\Set$ with $\Conv$, a suitable category of convex sets or barycentric algebras and affine maps;
			in this way, every $\measfun(X)$ would come equipped with a notion of mixing that is preserved by the post-processing maps.
			If we were to strictly follow~\cite{FGHL}, we would be implementing both of these changes.
			We have decided against doing so in order to demonstrate that the rudimentary structure of post-processing alone is already surprisingly powerful and expressive.
			In particular, it allows us to forgo the assumption that ordinary numerical probabilities are the correct way to model uncertainty in physical theories.\footnote{We do not find it particularly plausible that this would \emph{not} be the case. Instead, the point is that allowing for the possibility comes at little cost, and it therefore seems worth doing so in order to keep the framework as general and as simple as possible.}

			If random post-processing is allowed, then the following question becomes interesting: for given $\alpha \in \measfun(X)$ and $\beta \in \measfun(Y)$ it becomes an interesting question whether there exists $f : X \to Y$ with $f_* \alpha = \beta$.
			This condition generalizes the \emph{Blackwell order} on statistical experiments~\cite{Blackwell}.
			(If only deterministic post-processing is allowed, then the question still makes sense, but the answer is rarely going to be affirmative because there are too few candidate functions $f$.)
	\end{enumerate}
\end{rem}

In a fixed GPT, taking $\measfun(X)$ to be the set of all $X$-valued measurements as described around~\eqref{effect_family}, and taking $\measfun(f)$ to be the post-processing map~\eqref{post_processing}, defines a GMT.
To verify the functoriality condition~\eqref{functoriality}, we calculate the component of both sides of the equation at any $z \in Z$,
\begin{align*}
	(g_*(f_* \alpha))_z & = \sum_{y \in g^{-1}(z)} (f_* \alpha)_y = \sum_{y \in g^{-1}(z)} \sum_{x \in f^{-1}(y)} \alpha_x, \\[2pt]
	((g \circ f)_* \alpha)_z & = \sum_{x \in (g \circ f)^{-1}(z)} \alpha_x.
\end{align*}
The desired equation now follows because $(g \circ f)^{-1}(z)$ is the disjoint union of the sets $f^{-1}(y)$ for $y \in g^{-1}(z)$.
For example for a fixed Hilbert space $\mathcal{H}$, the GMT of POVMs on $\mathcal{H}$ is given by $\measfun(X)$ being the set of all POVMs with outcomes in $X$ on $\mathcal{H}$, and with $\measfun(f)$ given by post-processing as in~\eqref{post_processing}.

Some further examples of GMTs are as follows.

\begin{ex}[Effect algebras]
	\label{effect_algebras}
	An existing mathematical structure for effects in and beyond GPTs is the formalism of \newterm{effect algebras}~\cite{FB}.
	Since these do not play any further role in this paper, we refer to the literature for the definition.
	Given an effect algebra $E$, we can construct a GMT $\measfun_E$ by defining
	\[
		\measfun_E(X) \coloneqq \left\{ (e_x)_{x \in X} \in E^X \;\bigg|\; \sum_{x \in X} e_x = 1 \right\},
	\]
	and with the action on morphisms again given by post-processing in terms of the addition in $E$.
	If we take $E$ to be the effect algebra of effects in a GPT, then $\measfun_E$ is precisely the measurement functor of that GPT as described above.
	For effect algebras in general, usage of this functor as a mathematical tool goes back to Jacobs~\cite[Definition~20]{Jacobs_convexity} and Staton and Uijlen~\cite{SU}.
	While Jacobs had studied characteristic properties of the functors of the form $\measfun_E : \FinSet \to \Set$, Staton and Uijlen have shown that this construction embeds the category of effect algebras fully faithfully into the functor category $[\FinSet, \Set]$.
\end{ex}

\begin{ex}[Projective measurements]
	For quantum theory with Hilbert space $\mathcal{H}$, one can also consider the GMT of projective measurements.
	This is a subfunctor of the GMT of POVMs.
	It can also be viewed as $\measfun_E$ for the effect algebra $E$ given by the lattice of projections on $\mathcal{H}$.
\end{ex}

\begin{ex}[Classical measurements]
	\label{classical_measurements}
	For a fixed set of ``states'' $S$, there is a GMT which models measurements where the outcomes depend deterministically on the states, so that a measurement with outcomes in $X$ is just a function $S \to X$.
	More concisely,
	\[
		\measfun(X) \coloneqq X^S.
	\]
	The post-processing is given by composition: for $f : X \to Y$ and $\alpha : S \to X$, we put $f_* \alpha \coloneqq f \circ \alpha$.
	It is clear that this implements the idea of a classical GMT.

	More generally, let $\mathcal{B}$ be a Boolean algebra.
	Then this is in particular an effect algebra, and so we can construct a GMT $\measfun_\mathcal{B}$ as above.
	As we will turn to in more detail in~\cref{classical}, such a GMT deserves to be called classical.
	Taking $\mathcal{B} = \{0,1\}^S$ to be the power set of a set $S$ recovers the deterministic GMT of the previous paragraph.
\end{ex}

It could also be of interest to consider probabilistic measurements on a classical state space.
To formalize this as a GMT, let us write $\Delta(X)$ for the set of all probability distributions on a finite set $X$,
\[
	\Delta(X) \coloneqq \left\{ p : X \to [0,1] \;\bigg|\; \sum_{x \in X} p(x) = 1 \right\}.
\]
This is in fact a functor $\Delta : \FinSet \to \Set$, where the functoriality is given by pushforward of probability distributions: for any function $f : X \to Y$,
\begin{equation}
	\label{pushforward}
	(f_* p)(y) \coloneqq \sum_{x \in f^{-1}(y)} p(x).
\end{equation}
Then $\Delta$ is a functor $\FinSet \to \Set$ with $\Delta(1) \cong 1$.

\begin{ex}[Probabilistic measurements and random functions]
	Let again $S$ be a fixed set.
	We can then combine the two previous examples in two different ways.
	\begin{enumerate}
		\item We can consider classical measurements with random outcomes, which corresponds to
			\begin{equation}
				\label{probabilistic_measurements}
				\mathcal{P}(X) \coloneqq \Delta(X)^S.
			\end{equation}
			Like this, a measurement with outcomes in $X$ is a function which assigns to every state in $S$ a probability distribution on $X$.
			The functoriality is obtained by composing the functors $\Delta$ and $(-)^S$ in the obvious way; more explicitly, this means that a measurements gets post-processed by taking the pushforward measure for every state as usual.

			This $\mathcal{P}$ is the natural GMT for a classical system with state space $S$ and probabilistic measurements.
			For $S = 1$, we recover $\mathcal{P} = \Delta$, so that the functor of probability distributions $\Delta$ itself can also be viewed as a (rather trivial) GMT.
		\item On the other hand, we can form a GMT $\mathcal{R}$ by composing the two functors in the opposite order,
			\begin{equation}
				\label{random_functions}
				\mathcal{R}(X) \coloneqq \Delta(X^S).
			\end{equation}
			Like this, a measurement is a ``random function'', i.e.~a probability distribution over maps $S \to X$.
			This is importantly distinct from the previous case, since the randomness is not in the outcomes but already in the choice of which function $S \to X$ to perform.

			As we will see, the GMT $\mathcal{R}$ displays features similar to that of quantum theory, such as contextuality.
			Therefore one probably should not think of $\mathcal{R}$ as a classical GMT.
	\end{enumerate}
\end{ex}

\begin{ex}[Sub-GMTs]
	\label{sub_gmts}
	Given a GMT $\measfun$, let us call a \newterm{sub-GMT} any collection $\measfunb$ of subsets $\measfunb(X) \subseteq \measfun(X)$ that is closed under post-processing: for all functions $f : X \to Y$, we must have
	\[
		f_* \measfunb(X) \,\subseteq\, \measfunb(Y).
	\]
	Then it is clear that $\measfunb$ is again a GMT.
	Let us look at some examples.
	\begin{itemize}
		\item For any fixed Hilbert space $\mathcal{H}$, the projective quantum measurements form a sub-GMT of the GMT of POVMs.
		\item If one considers the functor of probability distributions $\Delta$ as a GMT, then it is also a sub-GMT of the GMT of POVMs on any Hilbert space $\mathcal{H}$, namely the sub-GMT corresponding to those POVMs whose components are all scalar multiples of the identity operator.
		\item If $\mathcal{H}$ is a Hilbert space with a distinguished basis $S$, then the POVMs with diagonal components with respect to this basis form a sub-GMT of the GMT of POVMs on $\mathcal{H}$.
			This sub-GMT is clearly isomorphic to the GMT $\mathcal{P}$ of probabilistic measurements on $S$.
		\item For any set of states $S$, the classical measurements of \cref{classical_measurements} form a sub-GMT of the GMT $\mathcal{R}$ of random functions \eqref{random_functions}.
	\end{itemize}
	
	In general, the intersection of any family of sub-GMTs is again a sub-GMT.
	This implies that for any family of measurements $(\alpha_i \in \measfun(A_i))_{i \in I}$ in any GMT $\measfun$, there is a smallest sub-GMT containing all of these measurements.
	For any $X$, its collection of $X$-valued measurements can be described more concretely as
	\[
		\bigcup_{i \in I} \, \{ f_* \alpha_i \mid f : X_i \to Y \text{ for some } Y \}.
	\]
	In the language of~\cite{FGHL}, this is more or less the \emph{simulation closure} of the family $(\alpha_i)_{i \in I}$, where the difference is that we only close with respect to post-processing and not with respect to mixing.

	For example for any fixed finite set $A$, one can consider the sub-GMT generated by all measurements in $\measfun(A)$.
	When $A$ has two elements, then this is in the spirit of the \emph{effectively dichotomic meters} of~\cite{FGHL}.
	In particular, the GMT of classical measurements of \cref{classical_measurements} has a sub-GMT $\measfun_2 \subseteq \measfun$ where $\measfun_2(X)$ consists of all functions $S \to X$ whose image has cardinality $\leq 2$.
	In a GMT like this, all information that one can obtain by measuring is effectively binary.
\end{ex}

As especially \cref{random_functions} might illustrate, a GMT on its own does not yet have much physical content.
In particular, there is no notion of state yet, and relatedly no way in which measurements actually \emph{have} outcomes.
It is therefore pertinent to introduce a notion of state.
But before doing so, we first take a closer look at what extra structure a GMT automatically comes equipped with.

\begin{lem}
	\label{delta}
	Every GMT $\measfun$ comes equipped with canonical maps
	\[
		\delta_X \: : \: X \longrightarrow \measfun(X)
	\]
	which are natural in $X$ in the sense that for all $f : X \to Y$, the diagram
	\begin{equation}
		\label{delta_natural}
		\begin{tikzcd}
			X \arrow[r, "\delta_X"] \arrow[d, "f"'] & \measfun(X) \arrow[d, "\measfun(f)"] \\
			Y \arrow[r, "\delta_Y"'] & \measfun(Y)
		\end{tikzcd}
	\end{equation}
	commutes.
\end{lem}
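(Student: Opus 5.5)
The plan is to exploit the condition $\measfun(1) \cong 1$ together with the fact that every element $x \in X$ can be viewed as a function $\bar{x} : 1 \to X$ picking out $x$. Write $\ast$ for the unique element of $\measfun(1)$, which is the unique one-outcome measurement. We then define
\[
	\delta_X(x) \coloneqq \bar{x}_* \ast = \measfun(\bar{x})(\ast) \in \measfun(X).
\]
Intuitively, $\delta_X(x)$ is the trivial measurement that always yields outcome $x$: it arises by post-processing the unique one-outcome measurement with the constant map onto $x$. In the GPT case this gives the tuple of effects with $\tr$ in position $x$ and $0$ elsewhere, which is a useful sanity check on the definition.

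Naturality is then a one-line consequence of functoriality~\eqref{functoriality}. For $f : X \to Y$ and $x \in X$, we have $f \circ \bar{x} = \overline{f(x)}$ as maps $1 \to Y$. Hence
\[
	f_* \delta_X(x) = f_*(\bar{x}_* \ast) = (f \circ \bar{x})_* \ast = \overline{f(x)}_* \ast = \delta_Y(f(x)),
\]
which is exactly the commutativity of~\eqref{delta_natural}.

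The only point that needs some care is ``canonical''. The definition depends on no choices beyond the isomorphism $\measfun(1) \cong 1$, and since $\measfun(1)$ is a singleton, its element $\ast$ is determined uniquely. It is also worth remarking, though it is not required, that $\delta$ is the unique natural transformation from the identity functor (restricted to finite sets) to $\measfun$. Indeed, naturality applied to $\bar{x} : 1 \to X$ forces $\delta_X(x) = \bar{x}_*\delta_1(\ast)$, and $\delta_1(\ast)$ can only be $\ast$. I do not expect any real obstacle: the content lies in recognising that $\measfun(1) \cong 1$ provides a distinguished global element, and the rest is pure functoriality.
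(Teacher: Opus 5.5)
Your proposal is correct and matches the paper's proof: the paper likewise defines $\delta_X(x) \coloneqq x_* \tau$, with $\tau$ the unique element of $\measfun(1)$, and obtains naturality directly from functoriality~\eqref{functoriality} via $f \circ \bar{x} = \overline{f(x)}$. Your extra remark that $\delta$ is the unique natural transformation $\id_{\FinSet} \Rightarrow \measfun$ is also correct, and it is a good justification of the word ``canonical''.
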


\begin{proof}
	We identify an element $x \in X$ with the function $1 \to X$ which sends the unique element of $1$ to $x$.
	Then we can take
	\[
		\delta_X(x) \coloneqq x_* \tau,
	\]
	where $\tau \in \measfun(1)$ is the unique measurement with only one outcome.
	The commutativity of~\eqref{delta_natural} follows straightforwardly from the functoriality~\eqref{functoriality}.
\end{proof}

In the language of category theory, \cref{delta} says that there is a canonical natural transformation $\delta : \id_{\FinSet} \Rightarrow \measfun$ from the identity functor on $\FinSet$ to the measurement functor $\measfun$.
The physical interpretation is that $\delta_X$ assigns to every outcome $x \in X$ the trivial measurement which always returns $x$.

In fact, \cref{delta} is a special case of a more general phenomenon~\cite[Remark~4.10]{AMSW}: every functor between categories of sets comes equipped with a unique \emph{strength} in the sense of a canonical natural transformation with components
\[
	s_{X,Y} \: : \: X \times \measfun(Y) \longrightarrow \measfun(X \times Y)
\]
satisfying certain unit and associativity conditions~\cite[(1.7)--(1.8)]{Kock}.
Concretely, $s_{X,Y}$ maps an outcome $x \in X$ and a measurement $\alpha \in \measfun(Y)$ to the measurement which first performs $\alpha$ and then returns an element of $X \times Y$ which is $x$ in the first component and the outcome of $\alpha$ in the second component.
More formally speaking, we have $s_{X,Y}(x, \alpha) = \iota_* \alpha$, where
\begin{align*}
	\iota \: : \: Y & \longrightarrow X \times Y \\
	y & \longmapsto (x, y).
\end{align*}
Modulo trivial isomorphisms like $X \times 1 \cong X$, taking $Y = 1$ and using $\measfun(1) = \{\tau\}$ recovers $\delta_X$ as the strength component $s_{X,1}$.

Another automatic piece of structure is \newterm{marginalization}: for any two finite sets $X$ and $Y$, there is a canonical map
\begin{align}
	\label{marginalization}
	\begin{split}
		\measfun(X \times Y) & \longrightarrow \measfun(X) \times \measfun(Y) \\
		\alpha & \longmapsto (\pi_{X,*} \alpha, \pi_{Y,*} \alpha),
	\end{split}
\end{align}
where $\pi_X : X \times Y \to X$ and $\pi_Y : X \times Y \to Y$ are the projection maps.
So every measurement with values in $X \times Y$ ``marginalizes'' to a pair of measurements with values in $X$ and $Y$.
For $\measfun = \Delta$ the functor of probability distributions, this is the usual marginalization of joint distributions to their marginals.
In general, these maps are again automatically natural and satisfy certain unit and associativity conditions~\cite{FP}, making $\measfun$ into an \emph{oplax monoidal functor}~\cite[Definition~3.2]{AM}.

\section{States in generalized measurement theories}

We now turn to the question of how to make the measurements in a GMT actually \emph{have} outcomes.
This question is addressed by various notions of \emph{state}, which we define as mappings from measurements to outcomes satisfying certain consistency conditions.

\begin{defn}
	\label{deterministic_state}
	Given a GMT $\measfun$, a \newterm{deterministic state} $s$ is a family of maps
	\[
		s_X \: : \: \measfun(X) \longrightarrow X
	\]
	such that for all $f : X \to Y$, the diagram
	\[
		\begin{tikzcd}
			\measfun(X) \arrow[r, "s_X"] \arrow[d, "\measfun(f)"'] & X \arrow[d, "f"] \\
			\measfun(Y) \arrow[r, "s_Y"'] & Y
		\end{tikzcd}
	\]
	commutes.
\end{defn}

More succinctly, a deterministic state is a natural transformation $s : \measfun \Rightarrow \id_{\FinSet}$.
To look at a first example, recall the classical GMT of \cref{classical_measurements} where $\measfun(X) = X^S$, where $S$ is a fixed set of ``classical states''.
Then the following result justifies usage of the word ``state'' for both the elements of $S$ and the natural transformations $s : \measfun \Rightarrow \id_{\FinSet}$.

\begin{lem}
	\label{classical_states}
	The deterministic states of this classical GMT are in bijection with the elements of $S$.
\end{lem}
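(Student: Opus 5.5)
This is a Yoneda-type argument. Since $\measfun(X) = X^S = \FinSet$-style hom from $S$, a natural transformation $X^S \Rightarrow X$ should be evaluation at a point. The only subtlety is that $S$ may be infinite, so $X^S$ is not a representable functor on $\FinSet$; for that reason I argue directly rather than quote the Yoneda lemma.

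\emph{From points to states.} For each $t \in S$, define $(s^t)_X(\alpha) \coloneqq \alpha(t)$. Naturality is immediate:
\[
	s^t_Y(f_* \alpha) = (f \circ \alpha)(t) = f(s^t_X(\alpha)).
\]
The assignment $t \mapsto s^t$ is injective. Indeed, if $t \neq t'$, take $X = \{0,1\}$ and $\alpha = 1_{\{t\}}$; then $s^t_X(\alpha) = 1 \neq 0 = s^{t'}_X(\alpha)$. If $S$ is empty, both sides of the claimed bijection need separate inspection, which I handle at the end.

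\emph{From states to points.} Let $s$ be a deterministic state. The key idea is to apply naturality to maps $f\colon X \to \{0,1\}$, which detect preimages. For a measurement $\alpha\colon S \to X$ and $x \in X$, put $f = 1_{\{x\}}$. Naturality gives
\[
	1_{\{x\}}(s_X(\alpha)) = s_2(1_{\{x\}} \circ \alpha) = s_2\bigl(1_{\alpha^{-1}(x)}\bigr).
\]
Hence $s$ is entirely determined by the map $s_2\colon \{0,1\}^S \to \{0,1\}$, that is, by the collection
\[
	\mathcal{U} \coloneqq \{ A \subseteq S \mid s_2(1_A) = 1 \}.
\]
The claim to establish is that $\mathcal{U}$ is a principal ultrafilter $\{A \mid t \in A\}$. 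Then $s = s^t$, because $s_X(\alpha) = x$ exactly when $t \in \alpha^{-1}(x)$, i.e.\ when $\alpha(t) = x$.

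\emph{Ultrafilter property.} Use naturality for $X = \{0,1,2\}$. Take disjoint $A, B \subseteq S$ and the measurement $\alpha$ equal to $1$ on $A$, $2$ on $B$, and $0$ elsewhere. The value $s_3(\alpha)$ is a single element. Coarse-graining via indicator maps therefore shows that at most one of $A$ and $B$ lies in $\mathcal{U}$. Coarse-graining via the map $\{1,2\} \mapsto 1$ shows that $A \cup B \in \mathcal{U}$ iff $A \in \mathcal{U}$ or $B \in \mathcal{U}$. Complements follow from the swap map on $\{0,1\}$: exactly one of $A$ and $S \setminus A$ is in $\mathcal{U}$, and $S \in \mathcal{U}$ (since the measurement $1_S$ factors through $1$ and is fixed by the constant map). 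So $\mathcal{U}$ is an ultrafilter on $S$.

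\emph{Main obstacle: principality.} An abstract ultrafilter on an infinite $S$ need not be principal. The remedy is to use finite sets $X$ of arbitrary size together with the fact that $\measfun(X) = X^S$ contains all functions. Suppose first that $S$ is finite. Take $X = S$ and $\alpha = \id_S$, and let $t \coloneqq s_S(\id_S)$. For any $A$, we have $1_A = 1_A \circ \id_S$, so naturality gives $s_2(1_A) = 1_A(t)$, and hence $\mathcal{U}$ is principal at $t$.

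For infinite $S$ no finite partition can separate all points, and a free ultrafilter $\mathcal{U}$ does define a natural transformation via $s_X(\alpha) = $ the unique $x$ with $\alpha^{-1}(x) \in \mathcal{U}$. So in that case the lemma, read literally, would need $S$ finite. Failing that, "bijection" would have to be read as a bijection with the ultrafilters on $S$, i.e.\ with $\beta S$. I would therefore state the proof for finite $S$, as is implicit in the finite setting of the paper, and add a remark that for infinite $S$ the deterministic states correspond to the ultrafilters on $S$. The empty case is trivial: $X^\emptyset$ is a singleton, and no natural map to $\emptyset$ exists, so there are no states, matching $S = \emptyset$.
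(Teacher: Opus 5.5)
Your proof is correct. For finite $S$, its decisive step is to set $t \coloneqq s_S(\id_S)$ and apply naturality along $\alpha = \alpha \circ \id_S$. That is exactly the Yoneda argument the paper invokes by writing $\measfun = \FinSet(S,-)$. In that case the detour through $s_2$ and $\mathcal{U}$ is unnecessary, since directly $s_X(\alpha) = s_X(\alpha_* \id_S) = \alpha(t)$.

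What your ultrafilter analysis buys is the general case, which the paper's one-line proof passes over: $\FinSet(S,-)$ only makes sense for finite $S$, whereas \cref{classical_measurements} allows an arbitrary set $S$. Your argument correctly shows that for arbitrary $S$ the deterministic states correspond bijectively to ultrafilters on $S$, i.e.\ to points of $\beta S$. This agrees with the paper's own identification of this GMT with $\measfun_{\mathcal{B}}$ for $\mathcal{B} = 2^S$, i.e.\ with $\mathsf{Stone}(\beta S, -)$, where the Yoneda lemma for pro-objects yields exactly the points of $\beta S$. Since $|\beta S| = 2^{2^{|S|}}$ for infinite $S$, the lemma as literally stated fails there even at the level of cardinalities. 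So restricting to finite $S$, with $S = \emptyset$ checked by hand as you do, is the right reading of the statement.
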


\begin{proof}
	Since the functor under consideration is a hom-functor $\FinSet(S, -)$, this is an instance of the Yoneda lemma~\cite[Theorem~2.2.4]{Riehl}.
\end{proof}

Many GMTs do not have any deterministic states at all.
This phenomenon is known as \newterm{state-independent contextuality}.
Indeed both state-independent contextuality and our \cref{deterministic_state} are concerned with assigning an outcome to every measurement in a way that is consistent under deterministic post-processing, and in particular under coarse-graining.

\begin{thm}[Kochen--Specker]
	\label{KS}
	The GMTs of
	\begin{enumerate}
		\item projective measurements on a quantum system with Hilbert space of dimension $\geq 3$,
		\item POVMs on a quantum system with Hilbert space of dimension $\geq 2$,
	\end{enumerate}
	have no deterministic states.
\end{thm}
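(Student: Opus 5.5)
The plan is to show that a deterministic state yields a $\{0,1\}$-valued assignment on effects (projections in case (i)) that is additive on measurements. Then we invoke the classical Kochen--Specker theorem for (i), and build a direct obstruction for (ii).

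\emph{Step 1 (from states to valuations).} Let $s$ be a deterministic state on a GMT of the form $\measfun_E$ for an effect algebra $E$; both projections and POVM effects are of this form. For an effect $e$ with $e \le 1$, write $\beta_e \coloneqq (e, 1-e) \in \measfun_E(2)$ and set $v(e) \coloneqq 1$ if $s_2(\beta_e)$ is the first outcome, and $v(e) \coloneqq 0$ otherwise.

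Take any measurement $\alpha = (e_x)_{x \in X}$ and any $x$. Let $\chi_x : X \to 2$ send $x$ to the first outcome and all other elements to the second. Then $\chi_{x,*}\alpha = \beta_{e_x}$, so naturality of $s$ gives $\chi_x(s_X(\alpha)) = s_2(\beta_{e_x})$. Hence $v(e_x) = 1$ exactly when $s_X(\alpha) = x$. It follows that $\sum_{x} v(e_x) = 1$ for every measurement $(e_x)_{x\in X}$, and that $v$ is well defined on effects, independently of the measurement in which an effect occurs.

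\emph{Step 2 (projective case).} In dimension $\ge 3$, restrict to rank-one projections. Orthonormal bases $(|u_i\rangle\langle u_i|)_{i=1}^n$ are projective measurements. By Step 1, $v$ is therefore a $\{0,1\}$-valued function on rays that assigns $1$ to exactly one vector in each orthonormal basis. The classical Kochen--Specker theorem rules out such a function, for instance via the 18-vector set of Cabello et al.\ in dimension $4$, or the original 117 vectors in dimension $3$. For $\dim \ge 3$ in general, embed a $3$-dimensional subspace $W$ and complete every basis of $W$ by a fixed basis of $W^\perp$. If $v$ is $1$ on one of the fixed vectors of $W^\perp$, it must vanish on all of $W$. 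This contradicts the existence of the POVM $(P_W, 1-P_W)$, or of the refined basis measurement, which forces $v$ to take the value $1$ somewhere. Otherwise $v$ restricts to a KS valuation on $W$, which is impossible.

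\emph{Step 3 (POVM case).} This is the step I expect to need a small trick, and it is in fact easy. Consider the POVM $(\tfrac12 \cdot 1, \tfrac12 \cdot 1)$. Both of its components equal the same effect $e = \tfrac12\cdot 1$, so Step 1 gives $v(e)+v(e) = 1$, which is impossible for $v(e) \in \{0,1\}$.

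More directly: let $\alpha = (e,e) \in \measfun(2)$, and let $\sigma : 2 \to 2$ be the swap. Then $\sigma_*\alpha = \alpha$, so naturality gives $\sigma(s_2(\alpha)) = s_2(\alpha)$. But $\sigma$ has no fixed point, a contradiction. This argument works in any dimension, even dimension $1$. The statement says $\ge 2$, presumably so that the case is not subsumed trivially, but the proof does not need it.

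The only real obstacle is thus the appeal to the Kochen--Specker theorem in case (i). I would cite it rather than reprove it, and I would check carefully the reduction from $\dim \ge 3$ to a $3$-dimensional subspace.
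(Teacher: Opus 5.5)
The paper does not prove this theorem; it quotes it as the classical Kochen--Specker result. Your Step 1, the passage from a deterministic state to a noncontextual $\{0,1\}$-valuation via $\chi_{x,*}\alpha = \beta_{e_x}$, is correct. Step 3 is exactly the swap-symmetry argument the paper itself uses for $\mathcal{R}$ and later for the non-projectivity of POVMs, and you are right that it needs no dimension bound. The problem is the reduction in Step 2 from $\dim \geq 3$ to a $3$-dimensional subspace $W$.

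Your first branch, where $v(w_j) = 1$ for some vector $w_j$ of the fixed basis of $W^\perp$, yields no contradiction. In that case $v$ vanishes on every ray of $W$. By naturality under the coarse-graining of a refined basis measurement, $v(P_W) = 0$ and $v(1-P_W) = 1$. Every refined basis measurement then has its unique $1$ at $w_j$. All of this is consistent with every constraint coming from Step 1. The measurement $(P_W, 1-P_W)$ does force $v$ to equal $1$ somewhere, but that place is $1-P_W$, so nothing is contradicted. As written, you have only excluded valuations that happen to vanish on the fixed basis of $W^\perp$. The dimensions not covered by the explicit sets you cite ($\dim \geq 5$) are therefore not established.

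The repair is to choose $W$ after $v$. Since $v$ sums to $1$ over any orthonormal basis, there is a unit vector $u$ with $v(|u\rangle\langle u|) = 1$; take $W \ni u$. Complete $u$ to a basis of $W$, and then by the fixed basis of $W^\perp$. This shows $v(w_j) = 0$ for all $j$, so only your second branch can occur, and $v$ restricts to a Kochen--Specker valuation on $W \cong \C^3$.

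The statement also does not say finite dimension. In infinite dimension, orthonormal bases are not measurements in a GMT, because outcome sets are finite. Moreover, there is no a priori reason for $v$ to be $1$ on some finite-rank projection, so neither your Step 2 nor this repair applies verbatim. There you can instead write $\mathcal{H} \cong \C^3 \otimes \mathcal{K}$. The map $Q \mapsto Q \otimes 1_{\mathcal{K}}$ is a natural transformation from the GMT of projective measurements on $\C^3$ to that on $\mathcal{H}$. Precomposing a deterministic state $s$ with it gives a deterministic state on $\C^3$, which contradicts the three-dimensional case.
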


When one proves the Kochen--Specker theorem---say in the first form---one typically finds a \emph{finite} family of projective measurements, determined by a configuration of rays in the Hilbert space, which already establishes the contradiction with a deterministic assignment of outcomes.
In our setting, this amounts to finding a sub-GMT generated by finitely many measurements in the sense of \cref{sub_gmts}, such that already this sub-GMT does not have any deterministic state.\footnote{Note that this formulation bears some resemblance to the ``sheaf-theoretic'' approach to contextuality~\cite{AB,SU}.}

\begin{prop}
	\label{random_functions_contextuality}
	For any set of states $S \neq \emptyset$, the GMT of random functions $\mathcal{R}$ \cref{random_functions} does not have any deterministic states.
\end{prop}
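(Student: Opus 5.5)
The plan is to exhibit a single measurement that is invariant under a fixed-point-free post-processing. Naturality of a deterministic state would then force its outcome to be a fixed point of that post-processing, which is a contradiction. This is the same symmetry argument that shows the GMT $\Delta$ (the case $S = 1$) has no deterministic state: if $u \in \Delta(2)$ is the uniform distribution and $\sigma : 2 \to 2$ is the swap, then $\sigma_* u = u$. Naturality would give $s_2(u) = s_2(\sigma_* u) = \sigma(s_2(u))$, which is impossible because $\sigma$ has no fixed point.

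For general $S \neq \emptyset$, I will work with $X = 2 = \{0,1\}$ and the swap $\sigma : 2 \to 2$. The action of $\mathcal{R}(\sigma)$ on $\Delta(2^S)$ is the pushforward along the map $2^S \to 2^S$, $g \mapsto \sigma \circ g$. I will take the measurement
\[
	p \coloneqq \tfrac{1}{2}\,\delta_{c_0} + \tfrac{1}{2}\,\delta_{c_1} \in \Delta(2^S),
\]
where $c_i : S \to 2$ is the constant function with value $i$. Because $S \neq \emptyset$, $c_0 \neq c_1$, so $p$ is a genuine distribution on two distinct points. Since $\sigma \circ c_0 = c_1$ and $\sigma \circ c_1 = c_0$, the pushforward swaps the two point masses, and hence $\mathcal{R}(\sigma)(p) = p$. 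Alternatively, the uniform distribution on all of $2^S$ would also work when $S$ is finite, since $g \mapsto \sigma \circ g$ is a bijection; the two-point choice avoids any finiteness issue.

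Suppose now that $s$ is a deterministic state. The naturality square of \cref{deterministic_state} for $f = \sigma$ gives
\[
	s_2(p) = s_2(\mathcal{R}(\sigma)(p)) = \sigma(s_2(p)).
\]
So $s_2(p)$ would be a fixed point of $\sigma$, but $\sigma$ has none. Therefore no deterministic state exists.

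There is no real obstacle here. The only step needing care is checking that the functorial action of $\mathcal{R}$ on $\sigma$ is indeed pushforward along postcomposition with $\sigma$, and that $p$ is invariant under it. Both follow directly from the definitions in \eqref{random_functions} and \eqref{pushforward}.
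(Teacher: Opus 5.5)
Your proposal is correct and is essentially the paper's own proof: the same swap-invariant measurement $\tfrac12[c_0] + \tfrac12[c_1] \in \mathcal{R}(\{0,1\})$, and the same naturality argument forcing $s_{\{0,1\}}$ of it to be a fixed point of $\mathsf{swap}$. Your additional check that $\mathcal{R}(\mathsf{swap})$ acts by pushforward along postcomposition, and your reason for preferring the two-point distribution over the uniform one, are correct refinements of the same argument.
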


\begin{proof}
	Consider $X \coloneqq \{0,1\}$ and the constant functions $c_0, c_1 : S \to X$ mapping all elements of $S$ to $0$ and $1$, respectively.
	Then
	\[
		\alpha \coloneqq \frac{[c_0] + [c_1]}{2}
	\]
	is an element of $\mathcal{R}(\{0,1\})$ which satisfies the invariance condition $\alpha = \mathsf{swap}_* \alpha$, where $\mathsf{swap} : \{0,1\} \to \{0,1\}$ is the function which swaps $0$ and $1$.
	If we now had a deterministic state $s : \mathcal{R} \Rightarrow \id_{\FinSet}$, then the commutativity of the diagram in \cref{deterministic_state} for $f = \mathsf{swap}$ would imply that $s_{\{0,1\}}(\alpha) = \mathsf{swap}(s_{\{0,1\}}(\alpha))$.
	But since no element of $\{0,1\}$ is fixed by $\mathsf{swap}$, this is a contradiction.
\end{proof}

Of course, in quantum theory we have gotten used to the idea that the concept of deterministic state is too limited, as the Born rule is fundamentally probabilistic.
Considering probabilistic states in GMTs is possible through a simple modification of \cref{deterministic_state} which replaces the sets of outcomes by the sets of probability distributions over outcomes.

\begin{defn}
	\label{probabilistic_state}
	Given a GMT $\measfun$, a \newterm{probabilistic state} $\rho$ is a family of maps
	\[
		\rho_X \: : \: \measfun(X) \longrightarrow \Delta(X)
	\]
	such that for all $f : X \to Y$, the diagram
	\begin{equation}
		\label{probabilistic_state_diagram}
		\begin{tikzcd}
			\measfun(X) \arrow[r, "\rho_X"] \arrow[d, "\measfun(f)"'] & \Delta(X) \arrow[d, "\Delta(f)"] \\
			\measfun(Y) \arrow[r, "\rho_Y"'] & \Delta(Y)
		\end{tikzcd}
	\end{equation}
	commutes.
\end{defn}
	
We could also consider \emph{possibilistic} states by replacing $\Delta(X)$ with the nonempty power set considered below in~\eqref{possibilistic_measurements}.
While we will not pursue this further in the present paper, we find it intriguing that the GMT framework allows for the consideration of different types of uncertainty in a unified way.

Similar to what happened with deterministic states and contextuality, probabilistic states are a well-known concept in quantum theory:

\begin{thm}[Gleason]
	\label{Gleason}
	In the GMTs of
	\begin{enumerate}
		\item\label{gleason1}
			projective measurements on a quantum system with Hilbert space of finite dimension $\geq 3$,
		\item POVMs on a quantum system with Hilbert space of finite dimension $\geq 2$,
	\end{enumerate}
	the probabilistic states are in bijection with the mixed quantum states (density matrices).
\end{thm}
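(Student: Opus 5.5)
The plan is to reduce the statement to the classical Gleason theorem and its easy POVM analogue (Busch's theorem). The key observation is that a probabilistic state on a GMT of the form $\measfun_E$ is the same thing as a \emph{finitely additive probability measure} on the effect algebra $E$.

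\textbf{Step 1: from probabilistic states to probability functions on effects.} Let $\rho$ be a probabilistic state on $\measfun_E$, where $E$ is the projection lattice or the set of effects $0 \le P \le 1$. For an effect $e$, consider the two-outcome measurement $(e, 1-e) \in \measfun_E(\{0,1\})$ and put $\mu(e) \coloneqq \rho_{\{0,1\}}(e,1-e)(0)$. Naturality of $\rho$ with respect to functions $f : X \to \{0,1\}$ gives the following. For any measurement $(e_x)_{x \in X}$ and any $x$, post-processing along the indicator of $\{x\}$ yields $(e_x, 1-e_x)$. Hence
\[
	\rho_X\bigl((e_x)_x\bigr)(x) = \mu(e_x),
\]
so $\rho$ is determined by $\mu$. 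Since $\rho_X$ lands in $\Delta(X)$, we get $\sum_x \mu(e_x) = 1$ for every measurement. Conversely, any $\mu : E \to [0,1]$ with this property defines a natural family $\rho$. Naturality is exactly the additivity $\mu(e+e') = \mu(e) + \mu(e')$ for $e+e' \le 1$, and this additivity follows from applying normalization to $(e, e', 1-e-e')$ and to $(e+e', 1-e-e')$. So probabilistic states correspond bijectively to functions $\mu$ that are additive on orthogonal or summable effects and satisfy $\mu(1) = 1$.

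\textbf{Step 2: projective case.} For projections in dimension $n \ge 3$, such $\mu$ are exactly the frame functions of weight $1$. Gleason's theorem (invoked as a classical result) gives $\mu(P) = \tr(\varrho P)$ for a unique density matrix $\varrho$. Uniqueness holds because rank-one projections span the Hermitian matrices. Conversely, every density matrix evidently gives such a $\mu$. This establishes the bijection.

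\textbf{Step 3: POVM case.} Here $\mu$ is additive on effects, with $\mu \ge 0$ and $\mu(1) = 1$. I would extend $\mu$ to a linear functional in four steps.
\begin{enumerate}
	\item Additivity gives $\mu(e/m) = \mu(e)/m$, hence $\mu(qe) = q\mu(e)$ for rational $q \in [0,1]$.
	\item Monotonicity, which follows from additivity and nonnegativity, upgrades this to real scalars.
	\item Extend $\mu$ to the positive cone by homogeneity, and then to all Hermitian matrices by $\mu(A - B) = \mu(A) - \mu(B)$; additivity makes this well defined.
	\item The resulting positive linear functional with $\mu(1) = 1$ is $A \mapsto \tr(\varrho A)$ for a unique density matrix $\varrho$, by the Riesz/Hilbert–Schmidt duality.
\end{enumerate}
This argument works in every finite dimension, including dimension $2$.

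The main obstacle is the projective case. Its real content is Gleason's theorem itself, which I would cite rather than reprove; everything else is bookkeeping. In the POVM case, the only subtle point is passing from rational to real homogeneity, and monotonicity handles it.
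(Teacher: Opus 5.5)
Your proposal is correct and matches the paper's intent. The paper gives no proof of this theorem and simply attributes it to Gleason, with the POVM case being the easier analogue due to Busch; your Step~1 makes explicit the routine identification of probabilistic states on $\measfun_E$ with finitely additive probability measures on $E$ that this attribution presupposes, and your Steps~2 and~3 (citing Gleason for projections, and the rational-to-real homogeneity plus Riesz-duality argument for effects) are sound as written.
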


\begin{rem}
	It is worth noting that Gleason's original proof of~\ref{gleason1} already applies to separable infinite-dimensional Hilbert spaces.
	However, this requires an assumption of countable additivity.
	In our setting, formulating this would require allowing measurements with countable infinitely many outcomes.
	We expect this generalization to be straightforward, and that a version of \cref{Gleason} will then obtain.
	See also \cref{scope}.
\end{rem}

\begin{ex}[Unknown functions]
	\label{unknown_functions}
	We construct a GMT $\mathcal{U}$ similar to the GMT of random functions from \cref{random_functions}, but where the uncertainty about the functions is \emph{possibilistic} rather than probabilistic.

	First, we can formalize possibilistic uncertainty in terms of a functor $\mathcal{P} : \FinSet \to \Set$ analogous to the functor $\Delta$ of probability distributions.
	If we only keep track of which outcomes are possible rather than how likely they are, then we are dealing with nonempty sets of outcomes rather than probability distributions, so that we take $\mathcal{P}(X)$ to be the set of all nonempty subsets of $X$,
	\begin{equation}
		\label{possibilistic_measurements}
		\mathcal{P}(X) \coloneqq 2^X \setminus \{\emptyset\}.
	\end{equation}
	The functoriality on $f : X \to Y$ is given by the formation of the image subset, that is
	\[
		f_* A \coloneqq f(A)
	\]
	for all $A \in \mathcal{P}(X)$, in accordance with the idea that an outcome $y \in Y$ is possible for the post-processed measurement if and only if there is some $x \in f^{-1}(y)$ which is possible for the original measurement.

	As a variant of \cref{random_functions}, we hence obtain a GMT $\mathcal{U}$ of ``unknown functions'' given by
	\[
		\mathcal{U}(X) \coloneqq \mathcal{P}(X^S),
	\]
	and with post-processing defined in the obvious way.
	Again this seems to be meaningfully interpretable as a GMT: a measurement implements a function from $S$ to $X$, but there can be uncertainty about which function this is.
\end{ex}

The example of unknown functions is interesting because it displays the following even more dramatic form of contextuality.

\begin{prop}
	\label{unknown_functions_contextuality}
	For any set of states $S \neq \emptyset$, the GMT of unknown functions $\mathcal{U}$ does not have any probabilistic state.
\end{prop}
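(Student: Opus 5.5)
The plan is to reuse the symmetry idea from the proof of \cref{random_functions_contextuality}, but to play two different symmetric measurements against each other via a coarse-graining. In $\mathcal{U}$, unlike in $\mathcal{R}$, a set of functions carries no multiplicities. So a coarse-graining can send a measurement with three symmetric outcomes to one with two symmetric outcomes. A probabilistic state must respect both symmetries, and that will force incompatible distributions.

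First, the setup. Take $X = \{0,1,2\}$ and let $c_0, c_1, c_2 : S \to X$ be the constant functions. Since $S \neq \emptyset$, these are three distinct elements of $X^S$. Put $\alpha \coloneqq \{c_0, c_1, c_2\} \in \mathcal{U}(X)$.

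Second, $\alpha$ is fully symmetric. For every permutation $\sigma$ of $X$ we have $\sigma_* \alpha = \{\sigma \circ c_i\} = \{c_{\sigma(i)}\} = \alpha$. Suppose $\rho$ is a probabilistic state. The naturality square \eqref{probabilistic_state_diagram} for $f = \sigma$ gives $\rho_X(\alpha) = \sigma_* \rho_X(\alpha)$ for all $\sigma$. Hence $\rho_X(\alpha)$ is the uniform distribution $(1/3, 1/3, 1/3)$.

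Third, coarse-grain. Let $f : X \to \{0,1\}$ send $0,1 \mapsto 0$ and $2 \mapsto 1$. Then $f_* \alpha = \{f \circ c_0, f \circ c_1, f \circ c_2\} = \{c_0', c_1'\}$, where $c_0', c_1'$ are the constant maps $S \to \{0,1\}$. Two of the three functions collapse, and the image set has only two elements. This $\beta \coloneqq \{c_0', c_1'\}$ is invariant under $\mathsf{swap}$, so by the same argument as in the second step $\rho_{\{0,1\}}(\beta) = (1/2, 1/2)$. Naturality for $f$ instead demands $\rho_{\{0,1\}}(\beta) = f_* \rho_X(\alpha) = (2/3, 1/3)$. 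This is a contradiction, so no probabilistic state exists.

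The only point needing care is the third step. One must check that the image of $\alpha$ under post-processing is really the two-element set, with the duplicate $c_0' = f \circ c_0 = f \circ c_1$ identified. This is exactly where possibilistic uncertainty differs from the probabilistic weights in $\mathcal{R}$. It also uses $S \neq \emptyset$ to guarantee that $c_0', c_1'$ are distinct.
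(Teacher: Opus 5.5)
Your proof is correct and is essentially the paper's argument. A symmetric ternary measurement in $\mathcal{U}(\{0,1,2\})$ coarse-grains to a binary measurement in $\mathcal{U}(\{0,1\})$, which is reached in several ways because $\mathcal{U}$ forgets multiplicities, and naturality of $\rho$ then gives inconsistent linear constraints.

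The paper takes $\alpha$ to be the set of \emph{all} functions $S \to \{0,1,2\}$ and uses that all six surjections onto $\{0,1\}$ send it to the same $\beta$. You instead use the constant functions and organize the same constraints as permutation invariance plus one coarse-graining. Incidentally, the distinctness of $c_0', c_1'$ is never actually used.
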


\begin{proof}
	We argue similarly as in the proof of \cref{random_functions_contextuality}.
	Indeed let $\alpha \in \mathcal{U}(\{0,1,2\})$ be the set of \emph{all} functions $S \to \{0,1,2\}$, and similarly for $\beta \in \mathcal{U}(\{0,1\})$.
	Then we have $\beta = f_* \alpha$ for every surjective function $f : \{0,1,2\} \to \{0,1\}$.

	Now a probabilistic state would have to assign to $\alpha$ a probability distribution $\rho_{\{0,1,2\}}(\alpha)$ on $\{0,1,2\}$ and to $\beta$ a probability distribution $\rho_{\{0,1\}}(\beta)$ on $\{0,1\}$ such that for every surjective $f : \{0,1,2\} \to \{0,1\}$, we likewise have
	\[
		\rho_{\{0,1\}}(\beta) = f_* \rho_{\{0,1,2\}}(\alpha).
	\]
	This equation is a linear constraint on the outcome probabilities in the distributions $\rho_{\{0,1,2\}}(\alpha)$ and $\rho_{\{0,1\}}(\beta)$.
	Upon considering all six choices of $f$, one easily sees that no solution to these constraints exists.
\end{proof}

\section{Measurements as tuples of effects}

We now turn to the main question of the paper: why are measurements made of effects?
The first important observation we will make is that that this is not always the case in a rather strong sense.
In order to do so, it is helpful to have a criterion which can distinguish GMTs that come from effect algebras from those that do not.
To this end, recall that an effect is also often defined as a binary measurement.
Furthermore, if a measurement $\alpha \in \measfun(X)$ is specified by a tuple of effects $(\alpha_x)_{x \in X}$, then its component $\alpha_x$ at an $x \in X$ can be reconstructed by post-processing with the indicator function $\chi_x : X \to [0,1]$ of $x$.
Thus the GMTs which arise from effect algebras have the following special property.\footnote{See also Jacobs' notion of \emph{effectus}, which is based on this observation and additional properties of measurements arising from effect algebras~\cite{Jacobs_newdirections}.}

\begin{defn}
	A GMT $\measfun$ is \newterm{binarizable} if for any finite set $X$ and distinct $\alpha, \beta \in \measfun(X)$, there is $f : X \to \{0,1\}$ such that $f_* \alpha \neq f_* \beta$.
\end{defn}

In fact, binarizability is a necessary condition for even being \emph{embeddable} into a GMT defined in terms of effects.
In other words, even if one drops the no-restriction hypothesis as in~\cite{FGHL} from an ``effectful'' GMT, then binarizability still obtains.

\begin{prop}
	\label{non_effect_gmt}
	There are GMTs $\measfun$ for which there is no sub-GMT embedding $\measfun \subseteq \measfun_E$ for any effect algebra $E$.
\end{prop}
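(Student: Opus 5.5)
The plan is to show that every sub-GMT of some $\measfun_E$ is binarizable, and then to give a GMT that is not binarizable.

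\emph{Step 1: sub-GMTs of $\measfun_E$ are binarizable.} Let $\measfun \subseteq \measfun_E$ be a sub-GMT, and let $\alpha, \beta \in \measfun(X)$ be distinct. As tuples of effects, they differ in some component $x$, so $\alpha_x \neq \beta_x$. Take $f \coloneqq \chi_x : X \to \{0,1\}$. Then $f_*\alpha = (\sum_{x' \neq x}\alpha_{x'}, \alpha_x)$, and similarly for $\beta$. Since post-processing in a sub-GMT is the restriction of post-processing in $\measfun_E$, the $1$-components of $f_*\alpha$ and $f_*\beta$ are $\alpha_x$ and $\beta_x$, which differ. So $f_*\alpha \neq f_*\beta$. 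I am only using that an embedding as a sub-GMT is a natural injection compatible with post-processing; any injective natural transformation would work the same way.

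\emph{Step 2: a non-binarizable GMT.} The natural candidate is the random-functions GMT $\mathcal{R}$ with $|S|\ge 2$, using the fact that distributions over functions are not determined by their binary coarse-grainings. Fix $S = \{a,b\}$, so that a measurement in $\mathcal{R}(X)$ is a distribution on $X^2$. The binary post-processings $f : X \to \{0,1\}$ then see only the distributions of the pairs $(f(\alpha(a)), f(\alpha(b)))$.

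I would look for two distinct distributions on $X\times X$ with the same pushforward under $f\times f$ for every $f : X\to\{0,1\}$. For $X=\{0,1,2\}$, one candidate is to compare the uniform distribution on the permutation-like pairs $\{(0,1),(1,2),(2,0)\}$ with the uniform distribution on $\{(1,0),(2,1),(0,2)\}$. Checking $f$ up to complement: for $f=\chi_0$, the first gives the pairs $(1,0),(0,0),(0,1)$ and the second gives $(0,1),(0,0),(1,0)$, which are equal as distributions. By the $\mathbb{Z}/3$ symmetry, this handles all $\chi_i$, and constant $f$ is trivial. Hence $\alpha \neq \beta$ but $f_*\alpha = f_*\beta$ for all binary $f$. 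So $\mathcal{R}$ is not binarizable, and by Step 1 it embeds into no $\measfun_E$.

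The main obstacle is finding and verifying this explicit pair; the case check above is short. If it failed, I would fall back on a simpler ad hoc GMT. For example, take $\measfun(X)$ to be $X$ together with one extra point $\star_X$ for $|X|\ge 3$, which post-processes to $\star_Y$ if $|Y|\ge 3$ and $f$ is bijective. Making this functorial is fiddly, so I prefer the $\mathcal{R}$ example.
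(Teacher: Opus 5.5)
Your proof is correct, and its skeleton is the paper's: binarizability is necessary for embedding into some $\measfun_E$, and one then exhibits a non-binarizable GMT. Your Step~1 spells out, via $\chi_x$ and for arbitrary injective natural transformations, what the paper justifies only briefly.

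The real difference is the witness. The paper builds the ad hoc functor $\measfun(X) = \{\tau_X\} + \binom{X}{3}$. Its non-binarizability needs no computation, since $\measfun(\{0,1\})$ is a singleton while $\measfun(\{0,1,2\})$ is not, and the same example is reused later as a weakly classical GMT. You instead take the random-function GMT $\mathcal{R}$ with $S=\{a,b\}$ and compare the forward and backward $3$-cycles on $\{0,1,2\}$. Your case check is right. The underlying reason is that for any $f : \{0,1,2\} \to \{0,1\}$, going once around a cycle gives as many $01$-steps as $10$-steps, so reversing every pair leaves the pushforward unchanged; the same works for cycles of any length $n \geq 3$.

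Your route buys a stronger message. A natural GMT already on the paper's list, with a genuine interpretation and with probabilistic states (evaluation at points of $S$), fails to embed into any effect-algebra GMT. It also illustrates \cref{main_thm} from the other side. By naturality, $\rho_X(\alpha)_x = \rho_{\{0,1\}}((\chi_x)_*\alpha)_1 = \rho_{\{0,1\}}((\chi_x)_*\beta)_1 = \rho_X(\beta)_x$ for every probabilistic state $\rho$, so $\mathcal{R}$ is not probabilistically separated.

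Your fallback is unnecessary, and as sketched it could not be made functorial. Take $|X| = 4$ and $f : X \to \{0,1\}$ with two fibres of size two. Since $\star_X$ is invariant under the permutation exchanging the fibres, $f_*\star_X$ would have to be fixed by $\mathsf{swap}$, which is impossible. The paper's construction is the clean repair of that idea: the extra points collapse to a trivial measurement $\tau_Y$ available in every $\measfun(Y)$.
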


\begin{proof}
	We first construct a non-binarizable functor $\measfun : \FinSet \to \Set$ and then describe its interpretation as a GMT.
	For a finite set $X$, we write
	\[
		\binom{X}{3}
	\]
	for the set of $3$-element subsets of $X$.
	Then we define
	\[
		\measfun(X) \coloneqq \{\tau_X \} + \binom{X}{3},
	\]
	where we think of the element $\tau_X$ as a trivial (completely uninformative) measurement.
	For every function $f : X \to Y$, we define $f_* \tau \coloneqq \tau$, and take $f_* A$ for a three-element subset $A \subseteq X$ to be given by the image $f(A) \subseteq Y$ if it still has three elements, and $f_* A \coloneqq \tau$ otherwise.
	The functoriality property~\eqref{functoriality} is straightforward to verify based on the fact that taking images of subsets cannot increase their cardinality.
	Now since $\measfun(\{0,1\})$ contains only one element, it is obvious that binarizability is violated, and therefore $\measfun$ cannot be embedded into any GMT arising from an effect algebra.

	As far as the physical interpretation is concerned, we think of this $\measfun$ as modelling a physical system on which exactly one genuinely ternary measurement can be performed.
	The unusual feature of this measurement is that coarse-graining it to binary already makes it completely uninformative.
\end{proof}

So arguably a better question is, \emph{when} are measurements made of effects?
And is there a good explanation for why this is the case in quantum theory?

In order to answer this question, we undertake a more systematic comparison of GPTs and GMTs.
Perhaps unsurprisingly, the set of probabilistic states of a GMT is essentially a GPT.
To formalize this idea, we now do need a suitably general notion of GPT which allows for an infinite-dimensional state space.

\begin{defn}[{e.g.~\cite{Ludwig_axiomatic}}]
	\label{gpt}
	A \newterm{general probabilistic theory (GPT)} consists of a locally convex vector space $V$ together with a compact convex subset $\mathcal{S} \subseteq V$ and a distinguished functional $\tr \in V^*$ such that $\tr(s) = 1$ for all $s \in \mathcal{S}$.
\end{defn}

Now given a GMT $\measfun$, we can take convex combinations of probabilistic states in order to obtain new probabilistic states.
One way to build a vector space $V$ in which these convex combinations live is to consider the set of natural transformations
\begin{equation}
	\label{quasiprobabilistic_states}
	\measfun \Longrightarrow \tilde{\Delta},
\end{equation}
where $\tilde{\Delta} : \FinSet \to \Set$ is the functor which maps a finite set $X$ to the set of tuples of real numbers,
\[
	\tilde{\Delta}(X) \coloneqq \mathbb{R}^X,
\]
with functoriality given by pushforward as in~\eqref{pushforward}.
The idea here is to obtain a vector space of quasi-probabilistic states where one drops both the nonnegativity and the normalization of probability.
Now these quasi-probabilistic states~\eqref{quasiprobabilistic_states} clearly form a vector space $V$ with respect to pointwise addition and scalar multiplication.
We make it into a locally convex space by equipping it with the weakest topology which makes the projection maps to the $\R^X$ continuous.

Now the probabilistic states form a compact convex subset $\mathcal{S} \subseteq V$, where the compactness holds by Tychonoff's theorem.\footnote{By definition, $\mathcal{S}$ is a closed subset of the product of compact sets $\prod_{X \in \FinSet, \alpha \in \measfun(X)} \Delta(X)$, and hence compact itself.}
Furthermore, the normalization functional is given by
\[
	\tr(\rho) \coloneqq \rho_1(\tau)_\ast,
\]
where we write $1 = \{\ast\}$ for the singleton set.
Now for every measurement $\alpha \in \measfun(X)$ and every $x \in X$, we have an evaluation map
\begin{align*}
	\hat{\alpha}_x \: : \: V & \longrightarrow \R \\
	\rho & \longmapsto \rho_X(\alpha)_x,
\end{align*}
and it is easy to see that this is an effect.
Summing over $x$ gives
\[
	\sum_{x \in X} \hat{\alpha}_x(\rho) = \sum_{x \in X} \rho_X(\alpha)_x = \rho_1(\tau)_\ast = \tr(\rho),
\]
where the second equation holds by the naturality~\eqref{probabilistic_state_diagram} of $\rho$ applied to the unique map $X \to 1$.
Therefore the normalization condition~\eqref{effect_normalization} is satisfied, and we have associated to the measurement $\alpha$ in the GMT $\measfun$ a measurement $(\hat{\alpha}_x)_{x \in X}$ in the GPT of probabilistic states of $\measfun$.

In many cases, this construction is an embedding of the original GMT into the GMT associated to the GPT.
The following definition is designed to capture when this is the case.

\begin{defn}
	A GMT is \newterm{probabilistically separated} if every two distinct measurements differ on some probabilistic state.
\end{defn}

By the above considerations, we have mostly already proven the following key point.

\begin{thm}
	\label{main_thm}
	Every probabilistically separated GMT is a sub-GMT of a GPT, and in particular measurements in such a GMT can be identified with tuples of effects in the GPT (possibly with additional restrictions).
\end{thm}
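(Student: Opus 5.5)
The plan is to take the GPT $(V,\mathcal{S},\tr)$ of probabilistic states of $\measfun$ constructed above and show that the assignment $\alpha \mapsto \hat{\alpha} = (\hat{\alpha}_x)_{x \in X}$ is an injective natural transformation from $\measfun$ to the GMT $\measfun_{\mathrm{GPT}}$ of that GPT. Then $\measfun$ is isomorphic to its image, which is a sub-GMT in the sense of \cref{sub_gmts}. Here $\measfun_{\mathrm{GPT}}(X)$ is the set of $X$-tuples of effects on $V$ summing to $\tr$, with post-processing given by~\eqref{post_processing}.

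\emph{Well-definedness.} For each $\alpha \in \measfun(X)$ and $x \in X$, the map $\hat{\alpha}_x$ is linear, since addition and scaling in $V$ are pointwise. It is continuous, because it factors through the projection $V \to \R^X$, $\rho \mapsto \rho_X(\alpha)$, which is continuous by the choice of topology, followed by a coordinate projection. It takes values in $[0,1]$ on $\mathcal{S}$, since $\rho_X(\alpha) \in \Delta(X)$. The normalization $\sum_x \hat{\alpha}_x = \tr$ was verified just before the statement, using naturality of $\rho$ along $X \to 1$.

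\emph{Naturality.} For $f : X \to Y$ and $y \in Y$, evaluate at any $\rho \in V$:
\[
\widehat{(f_*\alpha)}_y(\rho) = \rho_Y(f_*\alpha)_y = \bigl(f_*\rho_X(\alpha)\bigr)_y = \sum_{x \in f^{-1}(y)} \rho_X(\alpha)_x = \sum_{x\in f^{-1}(y)} \hat{\alpha}_x(\rho).
\]
The second equality is naturality of $\rho$ as a transformation into $\tilde{\Delta}$, i.e.\ the analogue of~\eqref{probabilistic_state_diagram} for all of $V$. So $\widehat{f_*\alpha} = f_*\hat{\alpha}$ as tuples in $V^*$. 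This identity holds on all of $V$, not only on $\mathcal{S}$, which is the reason for using natural transformations into $\tilde{\Delta}$.

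\emph{Injectivity.} This is exactly where probabilistic separation enters. If $\alpha \neq \beta$ in $\measfun(X)$, there is a probabilistic state $\rho \in \mathcal{S}$ with $\rho_X(\alpha) \neq \rho_X(\beta)$. Hence there is some $x$ with $\hat{\alpha}_x(\rho) \neq \hat{\beta}_x(\rho)$, so $\hat{\alpha} \neq \hat{\beta}$.

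An injective natural transformation identifies $\measfun(X)$ with its image, and that image is closed under post-processing by naturality. So $\measfun$ is isomorphic to a sub-GMT of the GPT's GMT, and every measurement corresponds to a tuple of effects summing to $\tr$. The phrase ``possibly with additional restrictions'' simply records that the image need not be all such tuples. No step is a real obstacle. The only delicate points are that the effects must be continuous linear functionals on the locally convex space $V$, which is handled by the choice of topology, and that naturality must be checked on quasi-probabilistic states rather than only on $\mathcal{S}$.
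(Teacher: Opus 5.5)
Your proposal is correct and follows essentially the same route as the paper: injectivity of $\alpha \mapsto (\hat{\alpha}_x)_{x \in X}$ comes from probabilistic separation, and compatibility with post-processing comes from naturality of $\rho$ along $f$. The one difference is that you verify $\widehat{f_* \alpha}_y = \sum_{x \in f^{-1}(y)} \hat{\alpha}_x$ on all of $V$ using naturality of the quasi-probabilistic states, whereas the paper evaluates only on $\mathcal{S}$ and asserts that this suffices; your version is slightly more careful, since $\mathcal{S}$ need not span $V$ and the identity is needed in $V^*$.
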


\begin{proof}
	If $\measfun$ is probabilistically separated, then the map $\alpha \mapsto (\hat{\alpha}_x)_{x \in X}$ described above is injective for every $X$.
	It remains to be shown that this construction respects post-processing, or in other words that the GMT $\measfun$ actually embeds into the GMT of effects of the associated GPT.
	Formally, this means that for $f : X \to Y$ and $\alpha \in \measfun(X)$, we need to show that
	\[
		\widehat{f_* \alpha}_y = \sum_{x \in f^{-1}(y)} \hat{\alpha}_x.
	\]
	It is enough to show that this holds upon evaluation on any probabilistic state $\rho$, in which case we get
	\[
		\widehat{f_* \alpha}_y(\rho) = \rho_Y(f_* \alpha)_y = \sum_{x \in f^{-1}(y)} \rho_X(\alpha)_x = \sum_{x \in f^{-1}(y)} \hat{\alpha}_x(\rho),
	\]
	as was to be shown.
	Here, the second equation holds by the naturality~\eqref{probabilistic_state_diagram} of $\rho$ applied to $f : X \to Y$, while the others are by unfolding definitions.
\end{proof}

\begin{rem}
	We view \cref{main_thm} as a candidate conceptual explanation for why measurements in quantum theory---and in GPTs more generally---are given by tuples of effects.
	But in order for this to be a satisfactory explanation, one needs to explain why measurements in a physical theory should be probabilistically separated.
	Assuming that probability distributions are the correct way to model uncertainty, we believe that at least a partial answer to this latter question can be given as follows: 
	if two measurements cannot be distinguished by any probabilistic state, then they are indistinguishable in any operational sense, and so it is natural to look for a mathematical formalism for the physical theory in which the two measurements are modelled by the same mathematical object.
	And the way in which we treat measurements in quantum theory and in GPTs does treat them \emph{not} as physical protocols or descriptions of apparatuses---of which there can be many different ones which always yield the same measurement statistics---but rather as mathematical objects in a more minimal mathematical formalism which has already undertaken a quotienting by operational indistinguishability.\footnote{Compare with the quotienting which happens in Spekkens' approach to contextuality~\cite{Spekkens_contextuality} as well as his discussion of the identity of indiscernibles~\cite{Spekkens_leibniz}.}
\end{rem}

\section{Compatibility and classicality}
\label{classical}

Finally, we turn to a discussion of when a GMT should be thought of as classical.
Let us start by describing what it means for two measurements to be compatible in a GMT.
As we will see, this question is already poses a number of interesting subtleties, as there are different degrees of compatibility.
Perhaps the most naive definition of compatibility is the following.

\begin{defn}
	\label{weak_compatibility}
	Given a GMT $\measfun$, a finite family of measurements $(\alpha_i \in \measfun(X_i))_{i=1}^n$ is \newterm{weakly compatible} if there exists a finite set $Y$ and functions
	\[
		\begin{tikzcd}
			& Y \arrow{dl}[swap]{f_1} \arrow{dr}{f_n} & \\
			X_1 & \cdots & X_n
		\end{tikzcd}
	\]
	as well as a measurement $\beta \in \measfun(Y)$ such that
	\[
		\alpha_i = f_{i,*} \beta \qquad \forall i = 1, \ldots, n.
	\]
\end{defn}

By the universal property of categorical products in $\FinSet$, one can restrict to the case
\[
	Y = X_1 \times \cdots \times X_n
\]
and $f_i$ the projection maps without loss of generality.
For example, for the GMT of POVMs in quantum theory, it is known that weak compatibility in this sense is not implied by pairwise weak compatibility~\cite{KHF}.

The following is then the first natural definition of classicality that one might give in the GMT framework.

\begin{defn}
	A GMT $\measfun$ is \newterm{weakly classical} if any finitely many measurements are compatible.
\end{defn}


\begin{ex}
	The weird GMT from the proof of \cref{non_effect_gmt} is weakly classical.

	Indeed given any finite family $(\alpha_i \in \measfun(X_i))_{i=1}^n$, we can take $Y$ to be any three-element set and $\beta \coloneqq Y$ itself.
	Then for those $i$ for which $\alpha_i$ is trivial, we can take $f_i : Y \to X_i$ to be any constant function, 
	while for those $i$ for which $\alpha_i$ is a three-element subset of $X_i$, we can take $f_i : Y \to X_i$ to be any map with $f_i(Y) = \alpha_i$.
\end{ex}

It seems questionable whether the previous example of a GMT should really be considered as classical.
As we will see, the following stronger notion of compatibility excludes it, and in fact defines a class of GMTs which arguably clearly deserve the name ``classical''.

\begin{defn}
	\label{strong_compatibility}
	Given a GMT $\measfun$, a finite family of measurements $(\alpha_i \in \measfun(X_i))_{i=1}^n$ is \newterm{strongly compatible} if weak compatibility holds as in \cref{weak_compatibility} with the following additional property:
	For any finite set $Z$ with $\gamma \in \measfun(Z)$ and maps $g_i : Z \to X_i$ such that
	\[
		g_{i,*} \gamma = \alpha_i \qquad \forall i = 1, \ldots, n,
	\]
	there is a unique function $h : Z \to Y$ such that
	\[
		g_i = f_i \circ h \qquad \forall i = 1, \ldots, n
	\]
	and
	\[
		\beta = h_* \gamma.
	\]
\end{defn}

In other words, there should be a \emph{universal} measurement $\beta$ which post-processes to the $\alpha_i$, in the sense that every other measurement $\gamma$ which post-processes to the $\alpha_i$ already post-processes to $\beta$ in a unique way.
In categorical terms, such a universal $\beta$ is a product of the $\alpha_i$ in the category of elements of $\measfun$~\cite[Section~2.4]{Riehl}.
Similarly as in the weak case, we have the following more workable characterization.

\begin{lem}
	\label{strong_compatibility_product}
	A family of measurements $(\alpha_i \in \measfun(X_i))_{i=1}^n$ is strongly compatible if and only if there is a unique
	\[
		\beta \in \measfun(X_1 \times \cdots \times X_n)
	\]
	such that $\alpha_i = \pi_{i,*} \beta$ for all $i = 1, \ldots, n$, where $\pi_i : X_1 \times \cdots \times X_n \to X_i$ is the projection map.
\end{lem}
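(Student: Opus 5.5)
The plan is to prove both directions directly from \cref{strong_compatibility}, using the universal property of the product $P \coloneqq X_1 \times \cdots \times X_n$ in $\FinSet$ and the functoriality~\eqref{functoriality}. Throughout, for maps $g_i : Z \to X_i$ I write $\langle g_1, \ldots, g_n \rangle : Z \to P$ for the induced map. It is the unique map with $\pi_i \circ \langle g_1, \ldots, g_n \rangle = g_i$ for all $i$.

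\emph{($\Leftarrow$)} Suppose $\beta \in \measfun(P)$ is the unique measurement with $\pi_{i,*}\beta = \alpha_i$ for all $i$. I take $Y \coloneqq P$, $f_i \coloneqq \pi_i$ and this $\beta$, which gives weak compatibility. For the universal property, let $\gamma \in \measfun(Z)$ and $g_i : Z \to X_i$ with $g_{i,*}\gamma = \alpha_i$.
\begin{itemize}
\item Uniqueness of $h$: any $h : Z \to P$ with $\pi_i \circ h = g_i$ for all $i$ must equal $h = \langle g_1, \ldots, g_n \rangle$, by the product property.
\item Existence: for this $h$, functoriality gives $\pi_{i,*}(h_*\gamma) = (\pi_i \circ h)_*\gamma = g_{i,*}\gamma = \alpha_i$. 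By the assumed uniqueness of $\beta$, this forces $h_*\gamma = \beta$.
\end{itemize}

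\emph{($\Rightarrow$)} Suppose $(Y, (f_i), \beta)$ is as in \cref{strong_compatibility}, and put $k \coloneqq \langle f_1, \ldots, f_n \rangle : Y \to P$.
\begin{itemize}
\item Existence: $k_*\beta$ has marginals $\pi_{i,*}k_*\beta = f_{i,*}\beta = \alpha_i$.
\item Uniqueness: let $\beta' \in \measfun(P)$ be any measurement with $\pi_{i,*}\beta' = \alpha_i$. Apply the universal property with $Z \coloneqq P$, $\gamma \coloneqq \beta'$ and $g_i \coloneqq \pi_i$. This yields $h : P \to Y$ with $f_i \circ h = \pi_i$ and $h_*\beta' = \beta$.
\end{itemize}
The key observation is that $\pi_i \circ (k \circ h) = f_i \circ h = \pi_i$ for all $i$, so $k \circ h = \id_P$ by the product property. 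Hence
\[
\beta' = (k\circ h)_*\beta' = k_*(h_*\beta') = k_*\beta,
\]
which shows that every such $\beta'$ coincides with $k_*\beta$.

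The only step needing care is this uniqueness argument in the forward direction. The point is to feed a competing candidate $\beta'$ itself into the universal property, and then recognize that $k \circ h$ is the identity. Everything else is unwinding definitions. Note that only existence of $h$ from the universal property is used there, not its uniqueness.
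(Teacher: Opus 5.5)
Your proof is correct and follows the paper's argument essentially step for step. In the \emph{if} direction you take the tupling $\langle g_1,\ldots,g_n\rangle$ and use uniqueness of $\beta$ to get $h_*\gamma=\beta$; in the \emph{only if} direction you push $\beta$ forward along $\langle f_1,\ldots,f_n\rangle$ and prove uniqueness by feeding the competitor $\beta'$ into the universal property with $Z = X_1\times\cdots\times X_n$, $g_i=\pi_i$, then deducing $k\circ h=\mathrm{id}$, exactly as the paper does with $\eta'$.
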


\begin{proof}
	We start with the ``if'' direction.
	Since the cartesian product $X_1 \times \cdots \times X_n$ is the categorical product in $\FinSet$, we obtain the unique
	\[
		h \: : \: Z \longrightarrow X_1 \times \cdots \times X_n
	\]
	from its universal property as the tupling $h = (g_1, \ldots, g_n)$.
	Now all conditions are already clear except for $\beta = h_* \gamma$, which follows from the fact that
	$\pi_{i,*} h_* \gamma = g_{i,*} \gamma = \alpha_i$ and the uniqueness of $\beta$ as the only element of $\measfun(X_1 \times \cdots \times X_n)$ with $\pi_{i,*} \beta = \alpha_i$ for all $i$.

	For the ``only if'' direction, suppose that the strong compatibility condition holds for some $Y$, $f_i$ and $\beta$ as in the definition.
	Then we can consider the tupling $(f_1, \dots, f_n) : Z \to X_1 \times \cdots \times X_n$ and take
	\[
		\eta \coloneqq (f_1, \ldots, f_n)_* \beta \,\in\, \measfun(X_1 \times \cdots \times X_n).
	\]
	This satisfies
	\[
		\pi_{i,*} \eta = f_{i,*} \beta = \alpha_i
	\]
	as required.
	To prove that it is the unique such measurement, suppose that $\eta' \in \measfun(X_1 \times \dots \times X_n)$ also satisfies $\pi_{i,*} \eta' = \alpha_i$ for all $i$.
	Then we can take $Z = X_1 \times \dots \times X_n$ and $g_i \coloneqq \pi_i$ in the definition of strong compatibility, and we obtain a unique $h : X_1 \times \dots \times X_n \to Y$ such that 
	\[
		\pi_i = f_i \circ h \qquad \forall i = 1, \ldots, n
	\]
	and $\beta = h_* \eta'$.
	Now we have $(f_1, \ldots, f_n) \circ h = (\pi_1, \ldots, \pi_n) = \id_{X_1 \times \cdots \times X_n}$ by the universal property of the product, and therefore
	\[
		\eta' = (f_1, \ldots, f_n)_* h_* \eta' = (f_1, \ldots, f_n)_* \beta = \eta,
	\]
	which proves the uniqueness of $\eta$.
\end{proof}

\begin{defn}
	A GMT $\measfun$ is \newterm{strongly classical} if any finitely many (or equivalently any two) measurements are strongly compatible.
\end{defn}

Indeed the fact that the existence binary products in a category implies the existence of finite products shows that pairwise strong compatibility is enough.
This is a nice feature of the definition which distinguishes it from weak compatibility.

\begin{cor}
	\label{strong_classicality_marginalization}
	A GMT $\measfun$ is strongly classical if and only if the marginalization maps~\eqref{marginalization} are bijections for all finite sets $X$ and $Y$.
\end{cor}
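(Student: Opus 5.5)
This follows directly from \cref{strong_compatibility_product} together with the remark that pairwise strong compatibility suffices. By definition, $\measfun$ is strongly classical if and only if every pair of measurements $\alpha \in \measfun(X)$, $\beta \in \measfun(Y)$ is strongly compatible. Applying \cref{strong_compatibility_product} with $n = 2$, this says: for every pair $(\alpha, \beta) \in \measfun(X) \times \measfun(Y)$ there is a unique $\gamma \in \measfun(X \times Y)$ with $\pi_{X,*}\gamma = \alpha$ and $\pi_{Y,*}\gamma = \beta$. Existence is surjectivity of the marginalization map~\eqref{marginalization}, and uniqueness is injectivity. Quantifying over all finite sets $X$ and $Y$ therefore gives exactly the claimed equivalence.

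The only point needing care is the parenthetical ``or equivalently any two'' in the definition of strong classicality. If one takes ``any finitely many'' as the official definition, I also need that bijectivity of the binary marginalization maps implies strong compatibility of arbitrary finite families.

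For this I will argue by induction on $n$, using the bijections
\[
	\measfun(X_1 \times \cdots \times X_n) \cong \measfun(X_1 \times \cdots \times X_{n-1}) \times \measfun(X_n).
\]
Given $\alpha_1, \dots, \alpha_n$, the induction hypothesis provides a unique joint $\beta' \in \measfun(X_1 \times \cdots \times X_{n-1})$. The binary bijection then provides a unique $\beta$ with marginals $\beta'$ and $\alpha_n$. For uniqueness of $\beta$ among all measurements with the prescribed one-dimensional marginals, I use functoriality~\eqref{functoriality}: any $\beta$ with $\pi_{i,*}\beta = \alpha_i$ has a pushforward to $X_1 \times \cdots \times X_{n-1}$ whose marginals are $\alpha_1, \dots, \alpha_{n-1}$, so that pushforward equals $\beta'$ by the induction hypothesis. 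Binary injectivity then pins down $\beta$.

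The cases $n = 0$ and $n = 1$ are trivial, using $\measfun(1) \cong 1$ and identity maps respectively. I expect no real obstacle here, only this bookkeeping.
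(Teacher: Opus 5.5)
Your proposal is correct and matches the paper's proof, which is just \cref{strong_compatibility_product} applied with $n = 2$ to all pairs in $\measfun(X) \times \measfun(Y)$, with existence giving surjectivity and uniqueness giving injectivity. Your induction for passing from pairs to arbitrary finite families is a valid explicit version of what the paper gets from the general fact that binary products, together with the terminal object $(1,\tau)$ supplied by $\measfun(1) \cong 1$, yield all finite products in the category of elements.
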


\begin{proof}
	By \cref{strong_compatibility_product} applied to all pairs of elements of $\measfun(X)$ and $\measfun(Y)$.
\end{proof}

Informally, \cref{strong_classicality_marginalization} states that strong classicality is equivalent to the condition that the measurements with outcomes in $X \times Y$ are precisely the pairs of measurements with outcomes in $X$ and $Y$.

The facts that strong compatibility corresponds to the existence of a categorical product in the category of elements, and that a categorical product is a certain kind of limit, suggest looking at other kinds of limits as well.
Inspired by this, the following definition postulates the existence of \emph{equalizers} in the category of elements.
Our choice of terminology will be explained by \cref{projective_projective}.

\begin{defn}
	A GMT $\measfun$ is \newterm{projective} if for any finite sets $X$ and $Y$ and any $\alpha \in \measfun(X)$ and functions
	\[
		\begin{tikzcd}
			X \ar[r, "f", shift left] \ar[r, "g"', shift right] & Y
		\end{tikzcd}
	\]
	with $f_* \alpha = g_* \alpha$, there exist a finite set $E$ and a measurement $\sigma \in \measfun(E)$ and a function $e : X \to E$ such that:
	\begin{enumerate}
		\item $f \circ e = g \circ e$ and $e_* \sigma = \alpha$.
		\item\label{projective_universality}
			For any finite set $E'$ and $\sigma' \in \measfun(E')$ and $e' : X \to E'$ such that $f \circ e' = g \circ e'$ and $e'_* \sigma' = \alpha$, there is a unique function $h : E' \to E$ such that $e' = e \circ h$ and $\sigma = h_* \sigma'$.
	\end{enumerate}
\end{defn}

The following reformulation, which is analogous to \cref{strong_compatibility_product}, makes this definition also more tangible and easier to check.

\begin{lem}
	\label{projective_equalizer}
	A GMT $\measfun$ is projective if and only if for any finite sets $X$ and $Y$ and any $\alpha \in \measfun(X)$ and functions
	\[
		\begin{tikzcd}
			X \ar[r, "f", shift left] \ar[r, "g"', shift right] & Y
		\end{tikzcd}
	\]
	with $f_* \alpha = g_* \alpha$, the measurement $\alpha$ is supported on the subset
	\[
		S \coloneqq \{x \in X \mid f(x) = g(x)\}
	\]
	in the sense that there exists a unique measurement $\sigma \in \measfun(S)$ with $e_* \sigma = \alpha$ for the inclusion map $i : S \to X$.
\end{lem}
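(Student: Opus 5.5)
The proof runs parallel to that of \cref{strong_compatibility_product}: the lemma identifies the equalizer in the category of elements of $\measfun$ with the concrete candidate $(S, \sigma, i)$, where $i : S \to X$ is the inclusion. Throughout, I read the definition of projectivity with the evident directions of the arrows. The universal object consists of $\sigma \in \measfun(E)$ together with $e : E \to X$ satisfying $f \circ e = g \circ e$ and $e_* \sigma = \alpha$. Likewise, in the universal property one has $e' : E' \to X$, and the mediating map satisfies $e' = e \circ h$ and $\sigma = h_* \sigma'$; for this to typecheck, $h$ must go from $E'$ to $E$. Similarly, the condition in the lemma reads $i_* \sigma = \alpha$. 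The only set-theoretic input is that $i : S \to X$ is the equalizer of $f$ and $g$ in $\FinSet$. That is, any $e' : E' \to X$ with $f \circ e' = g \circ e'$ factors as $e' = i \circ h$ for a unique $h : E' \to S$, and uniqueness holds because $i$ is injective.

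For the ``if'' direction, I take $E \coloneqq S$, $e \coloneqq i$, and $\sigma \in \measfun(S)$ the unique measurement with $i_* \sigma = \alpha$. Then the first condition holds by definition of $S$. For universality, let $(E', \sigma', e')$ be given. The equalizer property of $S$ in $\FinSet$ yields a unique $h : E' \to S$ with $e' = i \circ h$. It remains to check $\sigma = h_* \sigma'$. By functoriality~\eqref{functoriality},
\[
	i_*(h_* \sigma') = e'_* \sigma' = \alpha,
\]
so $h_* \sigma' = \sigma$ by the assumed uniqueness of $\sigma$.

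For the ``only if'' direction, let $(E, \sigma, e)$ be the universal object provided by projectivity. Since $f \circ e = g \circ e$, the map $e$ factors as $e = i \circ k$ with $k : E \to S$.

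\emph{Existence.} The measurement $k_* \sigma \in \measfun(S)$ satisfies $i_* k_* \sigma = e_* \sigma = \alpha$.

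\emph{Uniqueness.} Suppose $\sigma'' \in \measfun(S)$ also satisfies $i_* \sigma'' = \alpha$. Apply the universal property to $E' \coloneqq S$, $e' \coloneqq i$, $\sigma' \coloneqq \sigma''$. This gives $h : S \to E$ with $i = e \circ h$ and $\sigma = h_* \sigma''$. Then $i = i \circ k \circ h$, and injectivity of $i$ forces $k \circ h = \id_S$. Hence
\[
	\sigma'' = k_* h_* \sigma'' = k_* \sigma,
\]
which proves uniqueness. This is exactly the same trick as the identity $(f_1, \ldots, f_n) \circ h = \id$ in the proof of \cref{strong_compatibility_product}.

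I do not expect a genuine obstacle here. The main care point is sorting out the directions of the arrows in the definition of projectivity, and remembering that the factorization through $S$ uses injectivity of the inclusion both for uniqueness of mediating maps and for deducing $k \circ h = \id_S$. The degenerate case $S = \emptyset$ needs no separate treatment, since the argument only uses functoriality and the equalizer property of $S$.
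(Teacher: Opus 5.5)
Your proposal is correct and follows the paper's proof essentially step for step. The ``if'' direction uses the equalizer property of $i : S \to X$ together with the uniqueness of $\sigma$. The ``only if'' direction takes $k_*\sigma$ as the witness and gets uniqueness by applying universality to $(S, \sigma'', i)$ and deducing $k \circ h = \id_S$. Your reading of the arrow directions is the right one: the paper's own text has slips here, writing $e' = k$ and $k = e \circ h$ where it should have $e' = i$ and $i = e \circ h$, as you have it.
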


Also the proof is largely parallel to the proof of \cref{strong_compatibility_product}.

\begin{proof}
	We start with the ``if'' direction.
	Since $i$ is the equalizer of $f$ and $g$ in $\FinSet$ by construction, it is enough to show that $\sigma = h_* \sigma'$ while all other conditions are already clear.
	To this end, consider any $E'$ and $\sigma'$ and $e' : X \to E'$ such that $f \circ e' = g \circ e'$ and $e'_* \sigma' = \alpha$.
	Then the universal property of the equalizer $i : S \to X$ implies that there is a unique $h : E' \to S$ such that $e' = i \circ h$. 
	The equation $\sigma = h_* \sigma'$ holds by $i_* h_* \sigma' = (i \circ h)_* \sigma' = e'_* \sigma' = \alpha$ and the uniqueness of $\sigma$.

	For the ``only if'' direction, suppose that the projectivity condition holds for some $E$ and $\sigma$ and $e$ as in the definition.
	Then we consider the map $k : E \to S$ which acts like $e$ but with codomain $S$ instead of $E$, meaning that $e = i \circ k$.
	We argue that
	\[
		\nu \coloneqq k_* \sigma
	\]
	is the unique element of $\measfun(S)$ with $i_* \nu = \alpha$.
	Indeed
	\[
		i_* \nu = i_* k_* \sigma = e_* \sigma = \alpha,
	\]
	as required.
	To prove that it is unique with this property, suppose that $\nu' \in \measfun(S)$ also satisfies $i_* \nu' = \alpha$.
	Then we can take $E' = S$ and $\sigma' = \nu'$ and $e' = k$ in the definition of projectivity, and we obtain a unique $h : S \to E$ such that $k = e \circ h$ and $\sigma = h_* \nu'$.
	Now we have $k \circ h = \id_S$ by the universal property of the equalizer, and therefore
	\[
		\nu' = k_* h_* \nu' = k_* \sigma = \nu,
	\]
	which proves the uniqueness of $\nu$.
\end{proof}

In categorical language, \cref{projective_equalizer} can be formulated as the following analogue of \cref{strong_classicality_marginalization}.

\begin{cor}
	A GMT $\measfun$ is projective if and only if $\measfun : \FinSet \to \Set$ preserves equalizers.
\end{cor}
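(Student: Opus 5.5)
The plan is to reduce the corollary to \cref{projective_equalizer}, using the fact that equalizers in $\Set$ are computed as subsets. First recall that in $\FinSet$ the equalizer of a parallel pair $f,g : X \rightrightarrows Y$ is the inclusion $i : S \to X$ of $S = \{x \in X \mid f(x) = g(x)\}$. In $\Set$, the equalizer of $\measfun(f), \measfun(g) : \measfun(X) \rightrightarrows \measfun(Y)$ is the subset $\{\alpha \mid f_*\alpha = g_*\alpha\}$ with its inclusion. So saying that $\measfun$ preserves this equalizer means exactly that $\measfun(i) : \measfun(S) \to \measfun(X)$ is a bijection onto $\{\alpha \mid f_*\alpha = g_*\alpha\}$. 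This amounts to three conditions:
\begin{enumerate}
	\item the image of $i_*$ lies in this subset;
	\item $i_*$ is injective;
	\item every $\alpha$ with $f_*\alpha = g_*\alpha$ is of the form $i_*\sigma$.
\end{enumerate}

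Condition (i) always holds by functoriality, since $f \circ i = g \circ i$ gives $f_* i_* \sigma = g_* i_* \sigma$. Conditions (ii) and (iii) together say that each such $\alpha$ has a \emph{unique} preimage $\sigma \in \measfun(S)$. That is precisely the condition in \cref{projective_equalizer}. So if $\measfun$ preserves equalizers, it is projective by that lemma.

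For the converse, assume projectivity, so that \cref{projective_equalizer} gives existence and uniqueness of $\sigma$ for every $\alpha$ with $f_*\alpha = g_*\alpha$. Existence gives surjectivity onto the subset. Injectivity of $i_*$ needs a little care: two elements $\sigma, \sigma' \in \measfun(S)$ with $i_*\sigma = i_*\sigma' =: \alpha$ satisfy $f_*\alpha = g_*\alpha$ by (i), so the uniqueness clause forces $\sigma = \sigma'$. Hence $\measfun(i)$ is an equalizer of $\measfun(f), \measfun(g)$.

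The remaining point is that preserving equalizers means preserving \emph{all} equalizer cones, not just the chosen one. Any other equalizer of $f,g$ in $\FinSet$ is isomorphic to $i : S \to X$ via a bijection commuting with the inclusions. Functors preserve isomorphisms, and limits are invariant under isomorphism of cones, so it suffices to treat the canonical $S$.

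I expect no real obstacle. The only subtlety is the injectivity in (ii), which must be extracted from the uniqueness clause of \cref{projective_equalizer} rather than assumed.
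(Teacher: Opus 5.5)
Your proposal is correct and takes the same route as the paper, whose proof simply observes that the corollary is a reformulation of \cref{projective_equalizer}, since the inclusion $i : S \to X$ is the equalizer of $f$ and $g$ in $\FinSet$. You make explicit what the paper leaves implicit: $\measfun(i)$ is an equalizer in $\Set$ exactly when it is a bijection onto $\{\alpha \mid f_*\alpha = g_*\alpha\}$, with injectivity recovered from the uniqueness clause via $f \circ i = g \circ i$, and it suffices to check the canonical equalizer cone.
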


\begin{proof}
	This is a reformulation of \cref{projective_equalizer} based on the fact that $S$ is the equalizer of $f$ and $g$ in $\FinSet$ by construction.
\end{proof}

\begin{ex}
	\label{projective_projective}
	In quantum theory with Hilbert space $\mathcal{H}$, the GMT of projective measurements is projective, while the GMT of POVMs is not.

	For the former, suppose that $(\alpha_x)_{x \in X}$ is a projective measurement and that $f, g : X \to Y$ are such that $f_* \alpha = g_* \alpha$.
	This means that for all $y \in Y$, we have
	\begin{equation}
		\label{projective_condition}
		\sum_{x \in f^{-1}(y)} \alpha_x = \sum_{x \in g^{-1}(y)} \alpha_x.
	\end{equation}
	Then since the $\alpha_x$ are projection operators with $\sum_{x \in X} \alpha_x = 1$, they are \emph{all} pairwise orthogonal, 
	and thus two partial sums can be equal only if they contain precisely the same nonzero terms.
	Therefore we must have $\alpha_x = 0$ for all $x$ with $f(x) \neq g(x)$, since otherwise $\alpha_x$ would contribute nontrivially to the left-hand side of~\eqref{projective_condition} with $y = f(x)$ but not to the right-hand side.
	In other words, $\alpha$ must be supported on the subset
	\[
		S \coloneqq \{x \in X \mid f(x) = g(x)\},
	\]
	as claimed.

	To see that the GMT of POVMs is not projective, we can argue by symmetries as in the proofs of \cref{random_functions_contextuality} and \cref{unknown_functions_contextuality}.
	Indeed by taking both POVM elements to be half the identity operator, we obtain $\alpha \in \measfun(\{0,1\})$ which is invariant under $\mathsf{swap} : \{0,1\} \to \{0,1\}$.
	Then taking $f \coloneqq \mathsf{swap}$ and $g \coloneqq \id_{\{0,1\}}$ gives $f_* \alpha = g_* \alpha$.
	But now the subset $S$ is empty, and there is no POVM on the empty set, so that the condition of \cref{projective_equalizer} fails.
\end{ex}

The following is the main result of this section.
It characterizes the GMTs associated to Boolean algebras (\cref{classical_measurements}) in terms of the special properties we have considered.
To state it, let us say that a GMT $\measfun$ is \newterm{nontrivial} if $|\measfun(X)| > 1$ for some nonempty set $X$.

\begin{thm}
	\label{thm_classical}
	For every nontrivial GMT, the following are equivalent:
	\begin{enumerate}
		\item\label{main_thm_onlyclassical}
			$\measfun$ is strongly classical.
		\item\label{main_thm_classical}
			$\measfun$ is strongly classical and projective.
		\item\label{main_thm_boolean}
			$\measfun$ is isomorphic to a GMT of the form $\measfun_\mathcal{B}$ for a Boolean algebra $\mathcal{B}$.
	\end{enumerate}
\end{thm}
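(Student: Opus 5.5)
The implication \ref{main_thm_classical}$\Rightarrow$\ref{main_thm_onlyclassical} is trivial, so I will prove \ref{main_thm_boolean}$\Rightarrow$\ref{main_thm_classical} directly and then \ref{main_thm_onlyclassical}$\Rightarrow$\ref{main_thm_boolean}, which is where the work lies.

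\emph{Step 1: \ref{main_thm_boolean}$\Rightarrow$\ref{main_thm_classical}.} Let $\measfun = \measfun_\mathcal{B}$.
\begin{itemize}
\item \emph{Strong classicality.} By \cref{strong_classicality_marginalization} it suffices to show that marginalization is bijective. Given partitions of unity $(a_x)_{x\in X}$ and $(b_y)_{y\in Y}$ in $\mathcal{B}$, the family $(a_x \wedge b_y)_{(x,y)}$ is a partition of unity with these marginals, by distributivity. Conversely, if $(c_{xy})$ has marginals $a$ and $b$, then $c_{xy} \le a_x \wedge b_y$. The elements $c_{xy}$ are pairwise disjoint and join to $1$, and so are the $a_x \wedge b_y$; hence equality holds.
\item \emph{Projectivity.} I would use \cref{projective_equalizer} and repeat the argument of \cref{projective_projective}. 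The components of a partition of unity are pairwise disjoint, so two partial joins agree only if they contain the same nonzero terms. Hence $a_x = 0$ whenever $f(x) \neq g(x)$. The restriction to $S$ is then the unique preimage, since $i_*$ is injective for an injective $i$.
\end{itemize}

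\emph{Step 2: setup for \ref{main_thm_onlyclassical}$\Rightarrow$\ref{main_thm_boolean}.} By \cref{strong_classicality_marginalization} and $\measfun(1)\cong 1$, the functor $\measfun$ preserves finite products. I would first dispose of the empty set. If $\measfun(\emptyset) \neq \emptyset$, then for every $Y$ we have $\measfun(\emptyset) \cong \measfun(\emptyset)\times\measfun(Y)$ via the marginalization bijection. Moreover every element of $\measfun(Y)$ is then hit by $\measfun(\emptyset\to Y)$ applied to the corresponding pair, which forces $|\measfun(Y)|$ to be at most $|\measfun(\emptyset)|$ in a way that I expect yields $|\measfun(Y)| = 1$ for $Y \neq \emptyset$. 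That contradicts nontriviality, so $\measfun(\emptyset) = \emptyset$.

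Next I would set $\mathcal{B} \coloneqq \measfun(\{0,1\})$. Every function $\phi : \{0,1\}^n \to \{0,1\}$ induces an operation
\[
\mathcal{B}^n \cong \measfun(\{0,1\}^n) \xrightarrow{\ \phi_*\ } \mathcal{B}.
\]
Functoriality together with naturality of the product isomorphisms shows that these operations compose like the functions $\phi$ do. Hence $\mathcal{B}$ is a model of the clone of all Boolean functions, which is a Boolean algebra. The constants $0,1$ come from $\delta$ of \cref{delta}, and nontriviality gives $0 \neq 1$.

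\emph{Step 3: comparison map and its injectivity.} I would define
\[
\Phi_X : \measfun(X) \to \measfun_\mathcal{B}(X), \qquad \alpha \mapsto (\chi_{x,*}\alpha)_{x\in X},
\]
which is natural in $X$. To see that its values are partitions of unity and to control it, I would factor it through the embedding $j : X \to \{0,1\}^X$ sending $x$ to the indicator of $\{x\}$. Then $\measfun(j)$ lands in $\measfun(\{0,1\}^X) \cong \mathcal{B}^X$ and equals $\Phi_X$ under this identification. Disjointness and the normalization $\bigvee_x \chi_{x,*}\alpha = 1$ follow from identities of Boolean functions holding on the image of $j$, pushed through $\measfun$. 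For $X \neq \emptyset$, the map $j$ has a retraction $r$, so $\measfun(j)$ and hence $\Phi_X$ are injective.

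\emph{Step 4, the main obstacle: surjectivity of $\Phi_X$.} I must show that every partition of unity $b \in \mathcal{B}^X \cong \measfun(\{0,1\}^X)$ lies in the image of $\measfun(j)$, i.e.\ that it satisfies $b = (j r)_* b$. My first attempt is to derive the needed fragment of projectivity from strong classicality alone. The key observation is this: if $f_*\alpha = g_*\alpha$, then $(\id,f)_*\alpha$ and $(\id,g)_*\alpha$ have the same marginals $\alpha$ and $f_*\alpha$, so they are equal by uniqueness of the joint. For a partition of unity $b$, I would apply this with $f = \id$ and $g = j r$. Both induce the same marginals on each coordinate projection $\{0,1\}^X \to \{0,1\}$: this reduces to the Boolean identity that $b_x$ is the meet of $b_x$ with the complements of the other $b_{x'}$, which holds because the $b_{x'}$ are pairwise disjoint. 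Since $\measfun(\{0,1\}^X) \cong \mathcal{B}^X$ is detected by these coordinate projections, we get $b = (j r)_* b$, which gives surjectivity. Naturality of $\Phi$ then completes the isomorphism $\measfun \cong \measfun_\mathcal{B}$.
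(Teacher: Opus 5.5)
Your proof is correct in outline, with two small steps that need repairing (second paragraph), and it takes a different route from the paper's.

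\textbf{Comparison.} The paper proves (ii)$\Leftrightarrow$(iii) via Grothendieck's pro-representability of left exact functors together with Stone duality. It proves (i)$\Rightarrow$(iii) by restricting $\measfun$ to $\FinSet_2$, which is the Lawvere theory of Boolean algebras, obtaining $\measfun|_{\FinSet_2} \cong \measfun_{\mathcal{B}}|_{\FinSet_2}$, and then extending to nonempty sets by the general theory of idempotent completions. You instead close the cycle (iii)$\Rightarrow$(ii)$\Rightarrow$(i)$\Rightarrow$(iii), so the Grothendieck argument is never needed. Your Step~2 is the same Lawvere-theory idea phrased as a clone. In place of idempotent completion you write down $\Phi_X(\alpha) = (\chi_{x,*}\alpha)_x$ and check bijectivity by hand. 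This is exactly the concrete content of the paper's splitting argument: $\measfun(X)$ is the image of the idempotent $(jr)_*$ on $\measfun(\{0,1\}^X) \cong \mathcal{B}^X$, and Steps~3--4 identify that image with the partitions of unity. Your route is self-contained, gives an explicit isomorphism, and avoids the Stone-duality identification of $\measfun_{\mathcal{B}}|_{\FinSet_2}$ with the Lawvere functor of $\mathcal{B}$. The paper's route is shorter and also yields (ii)$\Leftrightarrow$(iii) without nontriviality, which your cycle, passing through (i), does not.

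\textbf{Repairs.}
\begin{enumerate}
\item \emph{The empty set.} The bound $|\measfun(Y)| \le |\measfun(\emptyset)|$ proves nothing if $\measfun(\emptyset)$ is infinite. Use instead that $\emptyset \times Y = \emptyset$ with first projection $\id_\emptyset$. Hence marginalization is $\alpha \mapsto (\alpha, u_{Y,*}\alpha)$, where $u_Y \colon \emptyset \to Y$ is the empty map, so it is the graph of $u_{Y,*}$. Fix $\beta \in \measfun(\emptyset)$; surjectivity onto $\{\beta\} \times \measfun(Y)$ then forces $\measfun(Y) = \{u_{Y,*}\beta\}$.
\item \emph{Step~4.} The observation about $(\id,f)_*\alpha = (\id,g)_*\alpha$ plays no role: using it would presuppose $b = (jr)_* b$. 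What you actually need is detection by coordinate projections, namely $(\pi_x \circ j \circ r)_* b = b_x$. Here $\phi_x \coloneqq \pi_x \circ j \circ r$ satisfies $\phi_x(v) = 1$ iff $r(v) = x$. This is not the function $v_x \wedge \bigwedge_{x' \neq x} \neg v_{x'}$: the two agree only on the indicator vectors $e_{x'}$, since $r$ sends the zero vector and vectors with several ones to arbitrary points. The correct identity is the minterm expansion: on a partition of unity, $\phi(b) = \bigvee_{x' \colon \phi(e_{x'}) = 1} b_{x'}$. Minterms of vectors with at least two ones vanish by disjointness, and the zero vector's minterm $\bigwedge_{x'} \neg b_{x'}$ vanishes by the normalization $\bigvee_{x'} b_{x'} = 1$. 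So normalization is essential here, not only disjointness.
\item \emph{Minor points.} The claim ``nontriviality gives $0 \neq 1$'' needs your Step~3 retraction: if $\mathcal{B}$ were trivial, each $\measfun(X)$ with $X \neq \emptyset$ would be a retract of the singleton $\mathcal{B}^X$. Naturality of $\Phi$ follows from $\chi_y \circ f = \bigvee_{x \in f^{-1}(y)} \chi_x$ pushed through $\measfun$.
\end{enumerate}
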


\begin{proof}
	We first prove the equivalence of~\ref{main_thm_classical} and~\ref{main_thm_boolean}.
	From~\ref{main_thm_boolean} to~\ref{main_thm_classical}, we need to show that every GMT of the form $\measfun_\mathcal{B}$ for a Boolean algebra $\mathcal{B}$ is strongly classical and projective.
	For strong classicality, we use the criterion of \cref{strong_compatibility_product}.
	So let $(\alpha_i \in \measfun_\mathcal{B}(X_i))_{i \in I}$ be given.
	Then we can take $\beta \in \measfun_\mathcal{B}(X_1 \times \dots \times X_n)$ to be given by
	\[
		\beta_{(x_1, \dots, x_n)} \coloneqq \alpha_{x_1} \wedge \dots \wedge \alpha_{x_n}
	\]
	for all tuples $(x_1, \dots, x_n)$.
	This has the required pushforwards because
	\begin{align*}
		\left( \pi_{i,*} \beta \right)_{x_i} & = \bigvee_{x_1, \dots, \cancel{x_i}, \dots, x_n} \beta_{(x_1, \dots, x_n)} \\
		 & = \bigvee_{x_1, \dots, \cancel{x_i}, \dots, x_n} \left( \alpha_{x_1} \wedge \dots \wedge \alpha_{x_n} \right) \\
		 & = \left( \bigvee_{x_1} \alpha_{x_1} \right) \wedge \dots \wedge \alpha_{x_i} \wedge \dots \wedge \left( \bigvee_{x_n} \alpha_{x_n} \right) \\
		 &= \alpha_{x_i},
	\end{align*}
	where the second equation is by distributivity and the third by the normalization assumption $\bigvee_{x_j} \alpha_{x_j} = 1$.
	If $\gamma \in \measfun_\mathcal{B}(X_1 \times \dots \times X_n)$ is another measurement with $\pi_{i,*} \gamma = \alpha_i$ for all $i$, then we have
	\[
		\gamma_{(x_1, \dots, x_n)} \leq \alpha_{x_i}
	\]
	for all tuples $(x_1, \ldots, x_n)$, and therefore
	\[
		\gamma_{(x_1, \dots, x_n)} \leq \alpha_{x_1} \wedge \dots \wedge \alpha_{x_n} = \beta_{(x_1, \dots, x_n)}.
	\]
	Since both $\beta$ and $\gamma$ are partitions of unity in $\mathcal{B}$, these inequalities imply $\gamma = \beta$, which establishes the uniqueness.

	For projectivity, we use the criterion of \cref{projective_equalizer}, which works in the same way as in \cref{projective_projective}.
	Indeed if $\alpha \in \measfun_\mathcal{B}(X)$ and $f, g : X \to Y$ are such that $f_* \alpha = g_* \alpha$, then this means that
	\[
		\bigvee_{x \in f^{-1}(y)} \alpha_x = \bigvee_{x \in g^{-1}(y)} \alpha_x
	\]
	for all $y \in Y$.
	Then since the $\alpha_x$ are pairwise disjoint, we must have $\alpha_x = 0$ if $f(x) \neq g(x)$, and therefore $\alpha$ is indeed supported on $S$.

	For the other direction, we first note that for every Boolean algebra $\mathcal{B}$, the measurements $\alpha \in \measfun_{\mathcal{B}}(X)$, which are partitions of unity in $\mathcal{B}$ indexed by $X$, are in bijection with Boolean algebra homomorphisms $2^X \to \mathcal{B}$.\footnote{Given $\alpha$, the corresponding map $2^X \to \mathcal{B}$ is $A \mapsto \bigvee_{x \in A} \alpha_x$. Note that this bijection is analogous to how a POVM with outcomes in $X$ can either be seen as a tuple of positive operators indexed by $X$ or as a positive operator-valued measure defined on subsets of $X$.}
	Under Stone duality~\cite{Johnstone}, this means that we can identify the measurements with sets of continuous functions,
	\begin{equation}
		\label{stone_duality_measurement}
		\measfun_\mathcal{B}(X) = \mathsf{Stone}(\mathrm{Spec}(\mathcal{B}), X),
	\end{equation}
	where $\mathsf{Stone}$ denotes the category of Stone spaces and continuous maps, $\mathrm{Spec}(\mathcal{B})$ is the Stone space of $\mathcal{B}$ and $X$ is equipped with the discrete topology.

	Second, starting from~\ref{main_thm_classical} amounts to assuming that all finite products and equalizers exist in the category of elements of $\measfun$.
	By \cref{strong_compatibility_product,projective_equalizer} and the uniqueness up to unique isomorphism, we conclude that the forgetful functor from the category of elements back to $\FinSet$ preserves finite products and equalizers.
	Moreover since finite products and equalizers generate all finite limits, we know that the category of elements has all finite limits and that the forgetful functor preserves them, i.e.~that it is a left exact functor.
	It therefore remains to be shown that every left exact functor $\FinSet \to \Set$ is isomorphic to $\mathsf{Stone}(W, -)$ for some Stone space $W$.

	But now it is a special case of a result of Grothendieck~\cite[Corollaire, p.~374]{Grothendieck} that every left exact functor is \emph{pro-representable}, which means representable on the supercategory of pro-objects.
	Since the category of pro-objects on $\FinSet$ is exactly $\mathsf{Stone}$, this means that every left exact functor $\FinSet \to \Set$ is indeed isomorphic to $\mathsf{Stone}(W, -)$ for some Stone space $W$.

	Concerning~\ref{main_thm_onlyclassical}, the implication \ref{main_thm_classical}$\Rightarrow$\ref{main_thm_onlyclassical} is trivial.
	We now finish the proof by showing that~\ref{main_thm_onlyclassical} implies~\ref{main_thm_boolean}.
	To this end, consider the Lawvere theory of Boolean algebras, which we denote by $\LT$.
	By the general theory of Lawvere theories~\cite{HP}, $\LT$ has as objects the free Boolean algebras $2^{2^X}$ on finite sets $X$ and as morphisms the formal opposites of Boolean algebra homomorphisms $2^{2^X} \to 2^{2^Y}$.
	The product-preserving functors $\LT \to \Set$ correspond to Boolean algebras, where to a Boolean algebra $\mathcal{B}$ one associates the functor
	\begin{align}
		\begin{split}
			\label{boolean_lawvere_functor}
			\LT & \longrightarrow \Set, \\
			2^{2^X} & \longmapsto \BoolAlg(2^{2^X}, \mathcal{B}),
		\end{split}
	\end{align}
	where $\BoolAlg$ denotes the category of Boolean algebras and Boolean algebra homomorphisms.
	This is the restriction of the hom-functor $\BoolAlg(-, \mathcal{B})$ to the full subcategory of finitely generated free Boolean algebras.
	By Stone duality, we can equivalently describe $\LT$ as the sets of the form $2^X$ for finite sets $X$ with morphisms given by general maps $2^X \to 2^Y$.
	Under this equivalence, the functor~\eqref{boolean_lawvere_functor} corresponds exactly to~\eqref{stone_duality_measurement} restricted to sets whose cardinality is a power of $2$.
	In other words, if $\FinSet_2 \subseteq \FinSet$ denotes the full subcategory of sets whose cardinality is a power of $2$, then the product-preserving functors $\FinSet_2 \to \Set$ are precisely those functors of the form $\measfun_\mathcal{B}$ restricted to $\FinSet_2$.

	Let now finally $\measfun$ be a strongly classical GMT with $\measfun(\emptyset) = \emptyset$.
	In particular, strong classicality means that $\measfun : \FinSet \to \Set$ is a product-preserving functor.
	Then its restriction to $\FinSet_2$ is also product-preserving, and therefore there is a Boolean algebra $\mathcal{B}$ such that
	\begin{equation}
		\label{restriction_boolean}
		\measfun|_{\FinSet_2} \cong \measfun_\mathcal{B}|_{\FinSet_2}.
	\end{equation}
	We finally argue that this implies $\measfun \cong \measfun_\mathcal{B}$.
	Indeed for every nonempty set $X$, the embedding $X \to 2^X$ given by $x \mapsto \{x\}$ has a retraction.
	Therefore the category of nonempty sets $\FinSet_{\neq \emptyset}$ is the idempotent completion of $\FinSet_2$, and by the general theory of idempotent completions, it follows that the isomorphism~\eqref{restriction_boolean} extends to $\FinSet_{\neq \emptyset}$.
	To conclude, note that the nontriviality assumption implies that $\measfun(\emptyset) = \emptyset$,\footnote{Let $X$ be such that $|\measfun(X)| > 1$. Then the marginalization map $\measfun(\emptyset) = \measfun(\emptyset \times X) \to \measfun(\emptyset) \times \measfun(X)$ can be a bijection only if $\measfun(\emptyset) = \emptyset$.}
	and also that $\mathcal{B}$ is nontrivial (top and bottom element are distinct), which yields the missing $\measfun_\mathcal{B}(\emptyset) = \emptyset$.
\end{proof}

\begin{rem}
	\begin{enumerate}
		\item Instead of proving the implication \ref{main_thm_onlyclassical}$\Rightarrow$\ref{main_thm_boolean} by using the Lawvere theory of Boolean algebras, one can also prove \ref{main_thm_onlyclassical}$\Rightarrow$\ref{main_thm_classical} by results of Trnkov\'a on when product-preserving functors preserve all (finite) limits~\cite[Theorem~1]{Trnkova}.
		\item As the proof shows, the nontriviality restriction is relevant for the implication~\ref{main_thm_onlyclassical}$\Rightarrow$\ref{main_thm_boolean},\footnote{To see that it is actually necessary, let $\measfun : \FinSet \to \Set$ be such that $\measfun(X) = 1$ for all nonempty sets $X$. Then $\measfun(\emptyset)$ can be arbitrary set, and $\measfun$ is product-preserving and thus strongly classical, but it is not projective and hence not isomorphic to $\measfun_\mathcal{B}$ for any Boolean algebra $\mathcal{B}$.} but not for the equivalence \ref{main_thm_classical}$\Leftrightarrow$\ref{main_thm_boolean}.
	\end{enumerate}
\end{rem}

\printbibliography

\end{document}